\newtheorem{theorem}{Theorem}[section]
\newtheorem{lemma}[theorem]{Lemma}
\newtheorem{corollary}[theorem]{Corollary}
\begin{document}

%%% Start of article front matter
\begin{frontmatter}

\begin{fmbox}
\dochead{Research}

%%%%%%%%%%%%%%%%%%%%%%%%%%%%%%%%%%%%%%%%%%%%%%
%%                                          %%
%% Enter the title of your article here     %%
%%                                          %%
%%%%%%%%%%%%%%%%%%%%%%%%%%%%%%%%%%%%%%%%%%%%%%

\title{Factorization threshold models for scale-free networks generation}

%%%%%%%%%%%%%%%%%%%%%%%%%%%%%%%%%%%%%%%%%%%%%%
%%                                          %%
%% Enter the authors here                   %%
%%                                          %%
%% Specify information, if 
%%
%% in the form:                             %%
%%   <key>={<id1>,<id2>}                    %%
%%   <key>=                                 %%
%% Comment or delete the keys which are     %%
%% not used. Repeat \author command as much %%
%% as required.                             %%
%%                                          %%
%%%%%%%%%%%%%%%%%%%%%%%%%%%%%%%%%%%%%%%%%%%%%%

\author[
   addressref={aff1,aff3},
   email={artikov.akmalzhon@phystech.edu
 Authors are listed in alphabetical order.}
]{\inits{AA}\fnm{Akmal} \snm{Artikov}}
\author[
   addressref={aff1},
   email={ruskagerot@gmail.com}
]{\inits{AD}\fnm{Aleksandr} \snm{Dorodnykh}}
\author[
   addressref={aff1,aff2},
   email={kashin.yana@gmail.com}
]{\inits{YK}\fnm{Yana} \snm{Kashinskaya}}
\author[
   addressref={aff3},                   % id's of addresses, e.g. {aff1,aff2}
  % corref={aff1},                       % id of corresponding address, if any
 %  noteref={n1},                        % id's of article notes, if any
   email={sameg@yandex-team.ru, samosvat.egor@gmail.com}   % email address
]{\inits{SE}\fnm{Egor} \snm{Samosvat}}
%%%%%%%%%%%%%%%%%%%%%%%%%%%%%%%%%%%%%%%%%%%%%%
%%                                          %%
%% Enter the authors' addresses here        %%
%%                                          %%
%% Repeat \address commands as much as      %%
%% required.                                %%
%%                                          %%
%%%%%%%%%%%%%%%%%%%%%%%%%%%%%%%%%%%%%%%%%%%%%%

\address[id=aff1]{%                           % unique id
  \orgname{Moscow Institute of Physics and Technology (SU)}, % university, etc
  %\street{Waterloo Road},                     %
  %\postcode{}                                % post or zip code
  \city{Moscow},                              % city
  \cny{Russia}                                    % country
}
\address[id=aff2]{%
  \orgname{Skolkovo Institute of Science and Technology},
  %\street{D\"{u}sternbrooker Weg 20},
  %\postcode{24105}
  \city{Moscow},                              % city
  \cny{Russia}    
}

\address[id=aff3]{%                           % unique id
  \orgname{Yandex}, % university, etc
  %street{Waterloo Road},                     %
  %\postcode{}                                % post or zip code
  \city{Moscow},                              % city
  \cny{Russia}  % country
}

%%%%%%%%%%%%%%%%%%%%%%%%%%%%%%%%%%%%%%%%%%%%%%
%%                                          %%
%% Enter short notes here                   %%
%%                                          %%
%% Short notes will be after addresses      %%
%% on first page.                           %%
%%                                          %%
%%%%%%%%%%%%%%%%%%%%%%%%%%%%%%%%%%%%%%%%%%%%%%

\end{fmbox}% comment this for two column layout

%%%%%%%%%%%%%%%%%%%%%%%%%%%%%%%%%%%%%%%%%%%%%%
%%                                          %%
%% The Abstract begins here                 %%
%%                                          %%
%% Please refer to the Instructions for     %%
%% authors on http://www.biomedcentral.com  %%
%% and include the section headings         %%
%% accordingly for your article type.       %%
%%                                          %%
%%%%%%%%%%%%%%%%%%%%%%%%%%%%%%%%%%%%%%%%%%%%%%

\begin{abstractbox}

\begin{abstract} % abstract
In this article we suggest a new approach for scale-free networks generation with an alternative source of the power-law degree distribution. It comes from matrix factorization methods and geographical threshold models that were recently proven to show good results in scale-free networks generation.

We associate each node with a latent features vector distributed over a unit sphere and with a weight variable sampled from a Pareto distribution.  We join two nodes by an edge if they are spatially close and/or have large weights. The network produced by this approach is scale-free and has  a power-law degree distribution with an exponent of 2. In addition, we propose an extension of the model that allows us to generate directed networks with tunable power-law exponents.

\end{abstract}

%%%%%%%%%%%%%%%%%%%%%%%%%%%%%%%%%%%%%%%%%%%%%%
%%                                          %%
%% The keywords begin here                  %%
%%                                          %%
%% Put each keyword in separate \kwd{}.     %%
%%                                          %%
%%%%%%%%%%%%%%%%%%%%%%%%%%%%%%%%%%%%%%%%%%%%%%

\begin{keyword}
\kwd{scale-free networks}
\kwd{matrix factorization}
\kwd{threshold models}
\end{keyword}

% MSC classifications codes, if any
%\begin{keyword}[class=AMS]
%\kwd[Primary ]{}
%\kwd{}
%\kwd[; secondary ]{}
%\end{keyword}

\end{abstractbox}
%
%\end{fmbox}% uncomment this for twcolumn layout

\end{frontmatter}

%%%%%%%%%%%%%%%%%%%%%%%%%%%%%%%%%%%%%%%%%%%%%%
%%                                          %%
%% The Main Body begins here                %%
%%                                          %%
%% Please refer to the instructions for     %%
%% authors on:                              %%
%% http://www.biomedcentral.com/info/authors%%
%% and include the section headings         %%
%% accordingly for your article type.       %%
%%                                          %%
%% See the Results and Discussion section   %%
%% for details on how to create sub-sections%%
%%                                          %%
%% use \cite{...} to cite references        %%
%%  \cite{koon} and                         %%
%%  \cite{oreg,khar,zvai,xjon,schn,pond}    %%
%%  \nocite{smith,marg,hunn,advi,koha,mouse}%%
%%                                          %%
%%%%%%%%%%%%%%%%%%%%%%%%%%%%%%%%%%%%%%%%%%%%%%

%%%%%%%%%%%%%%%%%%%%%%%%% start of article main body
% <put your article body there>

%%%%%%%%%%%%%%%%
%% Background %%
%%

\section{Introduction}
\indent \indent Most social, biological, topological and technological networks display distinct non-trivial topological features demonstrating that connections between the nodes are neither regular nor random at the same time \cite{albert2002statistical}. Such systems are called \textit{complex networks}. On of the well-known and well-studied classes of complex networks is \textit{scale-free networks} whose degree distribution $P(k)$ follows a power law $P(k) \sim k^{-\alpha}$, where $\alpha$ is a parameter whose value is typically in the range $2 < \alpha < 3$. Many real networks have been reported to be scale-free \cite{clauset2009power}.

Generating scale-free networks is an important problem because they usually have useful properties such as high clustering \cite{colomer2012clustering}, robustness to random attacks \cite{callaway2000network} and easy achievable synchronization \cite{moreno2004synchronization}. Several models for producing scale-free networks have been suggested; most of them are based on the preferential attachment approach \cite{albert2002statistical}. This approach forces existing nodes of higher degrees to gain edges added to the network more rapidly in a “rich-get-richer” manner. This paper offers a model with another explanation of scale-free property. 

Our approach is inspired by \textit{matrix factorization}, a machine learning method being successfully used for link prediction \cite{menon2011link}. The main idea is to approximate a network adjacency matrix by a product of matrices $V$ and $V^T$, where $V$ is the matrix of nodes' latent features vectors.  To create a generative model of scale-free networks we sample latent features $V$ from some probabilistic distribution and try to generate a network adjacency matrix. Two nodes are connected by an edge if the dot product of their latent features exceeds some threshold. This threshold condition is influenced by the \textit{geographical threshold models} that are applied to scale-free network generation \cite{hayashi2005review}. Because of the methods used (adjacency matrix factorization and threshold condition) we call our model the \textbf{factorization threshold model}.

A network produced in such a way  is scale-free and follows power-law degree distribution with an exponent of $2$, which differs from the results for basic preferential attachment models \cite{barabasi1999emergence, bollobas2001degree, holme2002growing} where the exponent equals $3$. We also suggest an extension of our model that allows us to generate directed networks with a tunable power-law exponent.

This paper is organized as follows. Section~2 provides information about related works that inspired us. The formal description of our model in the case of an undirected fixed size network is presented in Section~3, which is followed by a discussion of how to generate growing networks. In Section~4 the problem of making resulting networks sparse is considered. Section~5 shows that our model indeed produces scale-free networks. Extensions of our model, which allows to generate directed networks with a tunable power-law exponents and some other interesting properties, will be discussed in Section~6. Section~7 concludes the paper.

\section{Related work}
\indent \indent In this section we consider related works that encouraged us to create a new model for complex networks generation.

\subsection{Matrix Factorization}

\indent \indent \textit{Matrix factorization} is a group of algorithms where a given matrix $R$ is factorized into two smaller matrices $Q$ and $P$ such that: $R \approx Q^TP$ \cite{lee2001algorithms}. 

There is a popular approach in \textit{recommendation systems} which is based on matrix factorization \cite{koren2009matrix}. Assume that users express their preferences by rating some items, this can be viewed as an approximate representation of their interests.  Combining known ratings we get partially filled matrix $R$, the idea is to approximate unknown ratings using matrix factorization $R \approx Q^TP$. A geometrical interpretation is the following. The rows of matrices $Q$ and $P$ can be seen as latent features vectors $\vec{q}_i$ and $\vec{p}_u$ of items and users, respectively. The dot product $(\vec{q}_i, \vec{p}_u)$ captures an interaction between an user $u$ and an item $i$ and it should approximate the rating of the item $i$ by the user $u$: $R_{ui}  \approx (\vec{q}_i, \vec{p}_u)$. Mapping of each user and item to latent features is considered as an optimization problem of minimizing distance between $R$  and $Q^TP$  that is usually solved using SGD (stochastic gradient descent) or ALS (alternating least squares) methods.

Furthermore, matrix factorization was suggested to be used for link prediction in networks \cite{menon2011link}. Link prediction refers to the problem of finding missing or hidden links which probably exist in a network \cite{liben2007link}. In\cite{menon2011link} it is solved via matrix factorization: a network adjacency matrix $A$ is approximated by a product of the matrices $V$ and $V^T$, where $V$ is the matrix of nodes' latent features. 

\subsection{Geographical threshold models}
\indent \indent \textit{Geographical threshold models} were recently proven to have good results in scale-free networks generation \cite{hayashi2005review}. We are going to briefly summarize one variation of these models \cite{masuda2005geographical}.

Suppose the number of nodes to be fixed. Each node carries a randomly and independently distributed weight variable $w_i \in \mathbb{R}$. Also, the nodes are uniformly and independently distributed with specified density in a $\mathbb{R}^d$. A pair of nodes with weights $w, w'$ and Euclidean distance $r$ are connected if and only if:

\begin{equation}\label{eq:GmodelEq}
(w + w') \cdot h(r) \geq \theta,
\end{equation}
\\
where $\theta$ is the model threshold parameter and $h(r)$ is a distance function that is assumed to decrease in $r$ . For example, we can take $h(r) = r^{-\beta}$ , where $\beta > 0$ .

First, exponential distribution of weights with the inverse scale parameter $\lambda$ has been studied.  This distribution of weights leads to scale-free networks with a power-law exponent of 2: $P(k) \propto k^{-2}$. It is interesting that the exponent of a power-law does not depend on the $\lambda$, $d$ and $\beta$ in this case. Second, Pareto weight distribution with scale parameter $w_0$ and shape parameter $a$ has been considered. In this case a tunable power-law degree distribution has been achieved: $P(k) \propto k^{-1 - \frac{a \beta}{d}  }$. 

There are other variations of this approach: uniform distribution of coordinates in the $d-$dimensional unit cube \cite{morita2006crossovers}, lattice-based models \cite{rozenfeld2002scale}, \cite{warren2002geography} and even  networks embedded in fractal space \cite{yakubo2011scale}.

\section{Model description}
\indent \indent We studied theoretically matrix factorization by turning it from a trainable supervised model into a generative probabilistic model. When matrix factorization is used in machine learning the adjacency matrix $A$ is given and the goal is to train the model by tuning the matrix of latent features $V$ in such way that $A \approx V^T V$. In our model  we make the reverse: latent features $V$ are sampled from some probabilistic distribution and we generate a network adjacency matrix $A$ based on $V^T V$. 

Formally our model is described in the following way:
$$
\begin{cases}\label{eq:modelEq}
A_{ij} =  \mathrm{I}\left[( \vec{v_i},  \vec{v_j}) \geq \theta \right]\\
\vec{v_i} = w_i \vec{x_i}  \in \mathbb{R}^d \\
w_i  \sim \text{Pareto}(a, w_0) , ~ \vec{x_i} \sim \text{Uniform} ( S^{d-1}) \\
 i = 1\ldots n, ~ j = 1\ldots n 
\end{cases}
$$

\begin{itemize}
    \item Network has $n$ nodes and each node is associated with a $d$-dimensional latent features vector $\vec{v_i}$.
    \item Each latent features vector $\vec{v_i}$ is a product of weight $w_i$ and direction $\vec{x_i}$.
\item Directions $\vec{x_i}$ are i.i.d. random  vectors uniformly distributed over the surface of $(d-1)$-sphere.
    \item Weights are i.i.d. random variables distributed according to Pareto distribution with the following density function $f(w)$:
\begin{equation}\label{eq:pareto}
f(w) = \frac{a}{w_0} {(\frac{w_0}{w})}^{a + 1}\; (w \geq w_0).
\end{equation}
	\item Edges between nodes $i$ and $j$ appear if a dot product of their latent features vectors $(\vec{v_i}, \vec{v_j})$ exceeds a threshold parameter $\theta$.
\end{itemize}

Therefore, we take into consideration both node's importance $w_i$ and its location $x_i$ on the surface of a $(d-1)-$sphere (that can be interpreted as the Earth in the case of $\vec{x_i} \in S^{2} \subset \mathbb{R}^3$). Thus, inspired by the matrix factorization approach we achieved the following model behavior: the edges in our model are assumed to be formed when a pair of nodes is spatially close and/or has large weights. Actually, compared with the geographical threshold models we use dot product to measure proximity of nodes instead of Euclidean distance.

We have defined our model for fixed size networks but in principle our model can be generalized for the case of growing networks. The problem is that a fixed threshold $\theta$ when the size of a network tends to infinity with high probability leads to a complete graph. But real networks are usually sparse. 

Therefore, in order to introduce \textit{growing factorization threshold models} we use a threshold function $\theta := \theta(n)$ which depends on the number of nodes $n$ in the network. Then for every value of network size $n$ we have the same parameters except of threshold $\theta$. This means that at every step, when a new node will be added to the graph, some of the existing edges will be removed. In the next section we will try to find threshold functions which lead to sparse networks.

In order to preserve readability of the proofs we consider only the case $d = 3$ because proofs for higher dimensions can be derived in a similar way. However, we will give not only mean-field approximations but also strict probabilistic proofs, which to the best of our knowledge have not been done for geographical threshold models yet and can be likely applied in the other works too. 

\section{Generating sparse networks}
\indent \indent The aim of this section is to model sparse growing networks. To do this we need to find a proper threshold function.

First, we have studied the growth of the real networks.  For example, Figure~\ref{ris:Citation} shows the growth of a citation graph. The data was obtained from the SNAP\footnote{https://snap.stanford.edu/data/} database. It can be seen that the function $y(x) = 4.95 x \log x - 40 x$ is a good estimation of the growth rate of this network. That is why we decided to focus on the linearithmic or sub-linearithmic growth rate of the model (here and subsequently, by the growth of the model we mean the growth of the number of edges).

\begin{figure}[ht]
\center{\includegraphics[scale=0.5]{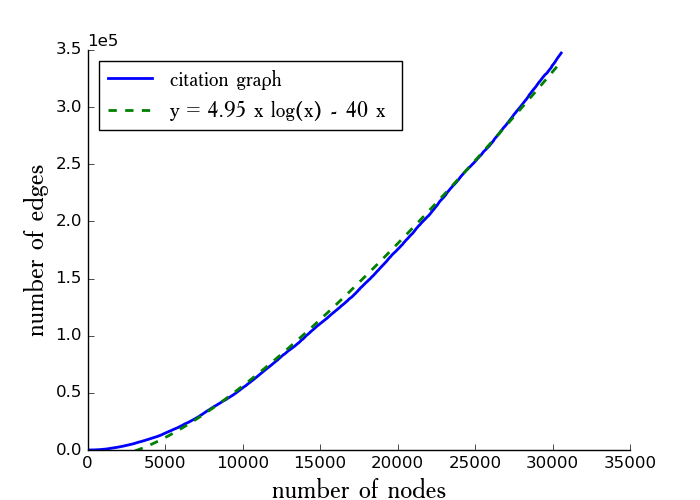}}
\caption{The growth of citation graph Arxiv HEP-PH}
\label{ris:Citation}
\end{figure}

\subsection{Analysis of the expected number of edges}

Let $M(n)$ denote the number of edges in the network of size $n$. To find its expectation we need the two following lemmas. 

\begin{lemma}\label{lm:probWithWeight}
The probability for a node with weight $w$ to be connected to a random node is
\begin{equation}\label{eq:p-w-e}
P_e(w)  =
 \begin{cases}
   \frac{1}{2}\left(1 - \frac{a\theta}{w(a+1)w_{0}}\right), &w > \frac{\theta}{w_0},\\
   \frac{1}{2}\frac{w_{0}^a}{\theta^a (a + 1)} w^a, &w \leq \frac{\theta}{w_0}.
 \end{cases}
\end{equation}
\end{lemma}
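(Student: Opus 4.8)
The plan is to condition on the weight $w$ of the fixed node and compute the connection probability by averaging over the randomness of the second node, which consists of its weight $w'$ (Pareto) and its direction $\vec{x}'$ (uniform on the sphere). First I would isolate the geometric ingredient. Writing $\vec{v} = w\vec{x}$ for the fixed node and $\vec{v}' = w'\vec{x}'$ for the random one, the threshold condition $(\vec{v},\vec{v}') \geq \theta$ becomes $w w' t \geq \theta$, where $t = (\vec{x},\vec{x}')$ is the cosine of the angle between the two directions. The fact specific to $d=3$ is that, for $\vec{x}'$ uniform on $S^{2}$ and any fixed $\vec{x}$, the inner product $t$ is uniformly distributed on $[-1,1]$ (Archimedes' hat-box theorem: the orthogonal projection of the uniform measure on the $2$-sphere onto a diameter is uniform). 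Hence, conditionally on $w$ and $w'$, the connection probability is $\mathbb{P}\bigl(t \geq \tfrac{\theta}{ww'}\bigr)$, which equals $\frac{1}{2}\bigl(1 - \frac{\theta}{ww'}\bigr)$ when $\frac{\theta}{ww'} \leq 1$ and $0$ otherwise.

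Next I would integrate this conditional probability over $w'$ against the Pareto density $f$. Because $f$ is supported on $[w_0,\infty)$, the effective lower limit of the integral is $\max\bigl(w_0,\tfrac{\theta}{w}\bigr)$, and this is exactly where the two cases originate. When $w > \frac{\theta}{w_0}$ we have $\frac{\theta}{w} < w_0$, so every admissible $w'$ contributes and one integrates from $w_0$; when $w \leq \frac{\theta}{w_0}$ only weights $w' \geq \frac{\theta}{w}$ contribute and one integrates from $\frac{\theta}{w}$. In both cases the computation reduces to two elementary integrals: the Pareto tail $\int_x^{\infty} f(w')\,dw' = (w_0/x)^a$ and the weighted tail $\int_x^{\infty} \frac{1}{w'} f(w')\,dw' = \frac{a}{a+1}\,w_0^a\, x^{-(a+1)}$. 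Substituting $x = w_0$ in the first case and $x = \frac{\theta}{w}$ in the second, then collecting the two contributions, yields precisely the two stated expressions for $P_e(w)$.

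The steps are all routine once the uniformity of $t$ is established; the only points demanding care are the bookkeeping of the integration limit — namely, correctly identifying the threshold $w = \frac{\theta}{w_0}$ at which the lower limit switches from $\frac{\theta}{w}$ to $w_0$ — and checking that the weighted tail integral converges, which holds for every $a > 0$. I expect this limit-switching to be the main (though minor) obstacle, since it is what forces the piecewise form of the answer.
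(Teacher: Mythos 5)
Your proposal is correct and follows essentially the same route as the paper's proof: the uniformity of $t=(\vec{x},\vec{x}')$ on $[-1,1]$ via Archimedes' hat-box theorem is exactly the paper's spherical-cap-area computation (a cap $\{(x,x')\geq s\}$ has normalized area $\tfrac{1}{2}(1-s)$), and both arguments then integrate this conditional probability against the Pareto density from $\max\{w_0,\theta/w\}$, with the case split at $w=\theta/w_0$ arising from that lower limit. No gaps; the tail-integral bookkeeping you outline reproduces the paper's computation verbatim.
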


\begin{lemma}\label{lm:expectedNumberEdges}
The edge probability in the network is
  \begin{equation}\label{eq:expectedNumberEdges}
P_e  =
 \begin{cases}
   \frac{1}{2} -  \frac{1}{2} \frac{a^2}{(a+1)^2}\frac{\theta}{w_0^{2}}, &\theta < w_{0}^2,\\
   \frac{w_0^{2a}}{2 \theta^a} \Big(\frac{a (\ln \theta - 2 \ln w_0)}{a+1} - \frac{a^2}{(a+1)^2} + 1\Big), &\theta \geq w_{0}^2. 
 \end{cases}
\end{equation}
\end{lemma}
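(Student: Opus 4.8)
The plan is to obtain $P_e$ as the average of the single-node connection probability $P_e(w)$ from Lemma~\ref{lm:probWithWeight} over the Pareto weight distribution. Picking two random nodes and conditioning on the weight $w$ of the first one, the law of total probability together with \eqref{eq:pareto} gives
\begin{equation}
P_e = \int_{w_0}^{\infty} P_e(w)\, f(w)\, dw .
\end{equation}
So the whole statement reduces to evaluating this single integral, and the two cases are dictated entirely by where the cutoff $\theta/w_0$ appearing in the piecewise formula \eqref{eq:p-w-e} sits relative to the support $[w_0,\infty)$ of $f$.

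First I would note that $\theta/w_0 \le w_0$ is equivalent to $\theta \le w_0^2$. Hence when $\theta < w_0^2$ the cutoff lies below the support, every admissible weight $w \ge w_0$ falls into the first branch of \eqref{eq:p-w-e}, and only that branch enters the integral. In this regime I would split the integrand into its constant term and its $1/w$ term: the constant term integrates to $\tfrac12$ because $f$ is a probability density, and the other is the elementary Pareto moment $\int_{w_0}^{\infty} a w_0^a w^{-(a+2)}\,dw = \tfrac{a}{(a+1)w_0}$, which produces the term proportional to $\theta/w_0^2$. This reproduces the first case.

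When $\theta \ge w_0^2$ the cutoff $\theta/w_0$ lies inside the support, so I would split the integral at $w=\theta/w_0$: on $[w_0,\theta/w_0]$ the node uses the small-weight branch $P_e(w)\propto w^a$, and on $[\theta/w_0,\infty)$ it uses the first branch. On the lower piece $w^a$ meets $f(w)\propto w^{-(a+1)}$, leaving an integrand proportional to $1/w$ and hence the logarithmic factor $\ln\theta - 2\ln w_0$. The upper piece again separates into a density-type integral and a $1/w$-type integral, now over $[\theta/w_0,\infty)$, giving the $w_0^{2a}/\theta^a$ prefactor and the $a^2/(a+1)^2$ correction. Summing the two pieces yields the second case.

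The computation itself is routine; the only care needed is bookkeeping. The main thing to get right is matching the two branches of $P_e(w)$ to the correct subintervals and recognizing that the single threshold $\theta=w_0^2$ governs the transition, so that no spurious third regime appears. A useful consistency check on the algebra is to set $\theta=w_0^2$ in both formulas: the logarithm vanishes and both cases collapse to $\tfrac12\bigl(1-\tfrac{a^2}{(a+1)^2}\bigr)$, confirming that the piecewise expression is continuous at the boundary.
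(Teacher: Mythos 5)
Your proposal is correct and follows essentially the same route as the paper: writing $P_e = \int_{w_0}^{\infty} P_e(w) f(w)\,\mathrm{d}w$ via the law of total probability, splitting on whether the cutoff $\theta/w_0$ lies inside the support (i.e.\ $\theta \gtrless w_0^2$), and evaluating the resulting elementary Pareto-type integrals, with the logarithm arising from the $1/w$ integrand on $[w_0,\theta/w_0]$. All the intermediate quantities you describe (the $\tfrac{a}{(a+1)w_0}$ moment, the $w_0^{2a}/\theta^a$ prefactor, the $a^2/(a+1)^2$ correction) match the paper's computation, and your continuity check at $\theta = w_0^2$ is a nice addition the paper does not include.
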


To improve readability, we moved the proofs of Lemma~\ref{lm:probWithWeight} and Lemma~\ref{lm:expectedNumberEdges} to Appendix.

The next theorem shows, that our model can have any growth which is less than quadratic.

\begin{theorem}\label{th:anygrowth}
Denote as $R(n)$ such function that $R(n) = o(n^2)$ and $R(n) > 0$. Then there exists such threshold function $\theta(n)$ that the growth of the model is $R(n)$: 
$$ \exists N \,\,\, \mathrm{E}M(n) = R(n) \,\,\,\,\, (n \geq N).$$
\end{theorem}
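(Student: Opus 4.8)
The plan is to reduce the theorem to an intermediate-value argument for the edge probability $P_e$ viewed as a function of $\theta$. First I would write the expected number of edges using linearity of expectation together with the exchangeability of the $n$ nodes: every unordered pair $\{i,j\}$ is connected with the same probability $P_e$ given by Lemma~\ref{lm:expectedNumberEdges}, so
\[
\mathrm{E}M(n) = \binom{n}{2} P_e = \frac{n(n-1)}{2}\, P_e(\theta).
\]
Thus it suffices to choose $\theta(n)$ so that $P_e(\theta(n)) = \dfrac{2R(n)}{n(n-1)} =: t(n)$, and the entire problem reduces to: which target values $t$ are attainable by $P_e(\theta)$?

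Next I would analyze $P_e$ as a function of $\theta$ from the explicit formula in Lemma~\ref{lm:expectedNumberEdges}. The three structural facts I need are: (i) $P_e$ is continuous — in particular the two branches agree at $\theta = w_0^2$, where both equal $\tfrac12\bigl(1 - \tfrac{a^2}{(a+1)^2}\bigr)$; (ii) $P_e$ is strictly decreasing, which follows either by differentiating each branch, or more cleanly from the observation that for every fixed realization of the latent vectors the indicator $\mathrm{I}[(\vec{v_i},\vec{v_j}) \geq \theta]$ is non-increasing in $\theta$, hence so is its expectation; and (iii) the boundary behavior is $P_e(\theta) \to \tfrac12$ as $\theta \to 0^{+}$ (from the first branch) and $P_e(\theta) \to 0$ as $\theta \to \infty$ (in the second branch the factor $\theta^{-a}$ dominates the $\ln\theta$ term). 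Combining continuity, monotonicity, and these two limits shows that $P_e$ maps $(0,\infty)$ onto the open interval $\bigl(0,\tfrac12\bigr)$.

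Finally I would close with the intermediate value theorem. Since $R(n) > 0$ we have $t(n) > 0$ for every $n$, and since $R(n) = o(n^2)$ we have $t(n) = \frac{2R(n)}{n(n-1)} \to 0$, so there is an $N$ with $t(n) < \tfrac12$ for all $n \geq N$. Hence for $n \geq N$ the target $t(n)$ lies in the range $\bigl(0,\tfrac12\bigr)$ of $P_e$, and by continuity there exists $\theta(n)$ with $P_e(\theta(n)) = t(n)$ (unique, by strict monotonicity), giving $\mathrm{E}M(n) = R(n)$ as required.

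The expectation identity and the limit computations are routine; the only genuinely delicate point is confirming that the range of $P_e$ is exactly $\bigl(0,\tfrac12\bigr)$, i.e.\ verifying the two boundary limits and continuity at $\theta = w_0^2$. This is where the hypothesis enters essentially: the ceiling $\tfrac12$ on $P_e$ imposes a ceiling of $\tfrac12\binom{n}{2}$ on $\mathrm{E}M(n)$, and the assumption $R(n) = o(n^2)$ is precisely what guarantees that the target eventually falls below this ceiling. I do not expect any obstacle beyond this bookkeeping.
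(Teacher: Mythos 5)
Your proposal is correct and follows essentially the same route as the paper's own proof: express $\mathrm{E}M(n) = \frac{n(n-1)}{2}P_e(\theta)$, note that $P_e$ is continuous in $\theta$ with $P_e \to \tfrac12$ as $\theta \to 0^{+}$ and $P_e \to 0$ as $\theta \to \infty$, and invoke the intermediate value theorem together with $R(n) = o(n^2)$ to place the target $\frac{2R(n)}{n(n-1)}$ inside the attainable range for all large $n$. Your write-up is in fact more careful than the paper's (you verify agreement of the two branches at $\theta = w_0^2$ and add strict monotonicity, which yields uniqueness of $\theta(n)$), but these are refinements of the same argument rather than a different approach.
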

\begin{proof}

It easy to check that $P_e$ is a continuous function of $\theta$. The intermediate value theorem states that $P_e(\theta)$ takes any value between $P_e(\theta = 0) = 1/2$ and $P_e(\theta = \infty) = 0$ at some point within the interval.\\

Since  $R(n) = o(n^2)$ and positive, there exists $N$ such that for all $n \ge N$, $0 < R(n) < \frac{1}{2} \times \frac{n(n-1)}{2}$.\\

It means that the equation $\mathrm{E}M(n) = R(n)$ is feasible for all $n \ge N$.
\end{proof}

Taking into account Theorem~\ref{th:anygrowth}, we obtain parameters for the linearithmic and linear growths of the expected number of edges. 

\begin{theorem} \label{th:loglinear}
Suppose the following threshold function: ${\theta(n) = D n^{\frac{1}{a}}}$, where D is a constant. Then the growth of the model is linearithmic:
$$ \mathrm{E}M (n) = A n \ln n (1 + o(1)) \,\,\,\text{ } (n \geq \frac{w_0^{2a}}{D^a}) ,$$
 where $A$ is a constant depending on the Pareto distribution parameters.
\end{theorem}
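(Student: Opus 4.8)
The plan is to combine the closed form for the edge probability $P_e$ from Lemma~\ref{lm:expectedNumberEdges} with the elementary identity $\mathrm{E}M(n) = \binom{n}{2} P_e$, which holds because each of the $\binom{n}{2}$ possible edges is present with probability $P_e$ and expectation is linear. First I would check that the stated range $n \geq \frac{w_0^{2a}}{D^a}$ is precisely the condition $\theta(n) \geq w_0^2$ that selects the second branch of~\eqref{eq:expectedNumberEdges}: substituting $\theta(n) = D n^{1/a}$ into $\theta \geq w_0^2$ and raising to the power $a$ gives $D^a n \geq w_0^{2a}$. Hence throughout the regime of interest we are entitled to use the $\theta \geq w_0^2$ case of the lemma.

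Next I would substitute $\theta = D n^{1/a}$ directly into that branch. Using $\theta^a = (D n^{1/a})^a = D^a n$ and $\ln \theta = \ln D + \frac{1}{a}\ln n$, the prefactor $\frac{w_0^{2a}}{2\theta^a}$ collapses to $\frac{w_0^{2a}}{2 D^a n}$, while the bracketed factor $\frac{a(\ln\theta - 2\ln w_0)}{a+1} - \frac{a^2}{(a+1)^2} + 1$ becomes $\frac{\ln n}{a+1} + C$, where $C$ is a constant depending only on $a$, $D$, and $w_0$ (collecting the $\ln D$, $\ln w_0$, and remaining rational terms). Since the only term growing with $n$ is $\frac{\ln n}{a+1}$, this yields
$$
P_e = \frac{w_0^{2a}}{2 D^a (a+1)} \cdot \frac{\ln n}{n}\,(1 + o(1)).
$$
Multiplying by $\binom{n}{2} = \frac{n(n-1)}{2}$ then cancels one power of $n$, and after replacing $(n-1)$ by $n(1 + o(1))$ we obtain $\mathrm{E}M(n) = A\, n \ln n\,(1 + o(1))$ with the explicit constant $A = \frac{w_0^{2a}}{4 D^a (a+1)}$, which indeed depends only on the Pareto parameters (and the chosen constant $D$).

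The computation is essentially routine substitution and asymptotic bookkeeping; there is no deep obstacle. The only point requiring a little care is verifying that all subdominant contributions inside the bracket—the additive constant $C$ as well as the $\ln D$ and $\ln w_0$ pieces—are genuinely $o(\ln n)$ relative to the leading $\frac{1}{a+1}\ln n$ term, so that they may legitimately be absorbed into the $(1 + o(1))$ factor. Once that is confirmed, the identification of $A$ and the linearithmic growth rate follow immediately.
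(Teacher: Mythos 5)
Your proposal is correct and matches the paper's own proof essentially step for step: both verify that $n \geq w_0^{2a}/D^a$ is exactly the condition $\theta(n) \geq w_0^2$ selecting the second branch of Lemma~\ref{lm:expectedNumberEdges}, substitute $\theta = Dn^{1/a}$ (so $\theta^a = D^a n$, $\ln\theta = \ln D + \tfrac{1}{a}\ln n$) into $\mathrm{E}M = \tfrac{n(n-1)}{2}P_e$, and absorb the constant terms into the $(1+o(1))$ factor, arriving at the same leading constant $A = \frac{w_0^{2a}}{4D^a(a+1)}$. The only cosmetic difference is that you extract the asymptotics of $P_e$ before multiplying by $\binom{n}{2}$, whereas the paper multiplies first and simplifies afterwards.
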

\begin{proof}
We can rewrite inequality $n \geq \frac{w_0^{2a}}{D^a}$ as ${Dn^{\frac{1}{a}} \geq w_0^2}$ 
and apply Lemma~\ref{lm:expectedNumberEdges} in the case $\theta(n) = Dn^{\frac{1}{a}} \geq w_0^2$ 
\begin{equation}\label{eq:generalEM}
\mathrm{E}M = \frac{n(n-1)}{2}\frac{w_0^{2a}}{2 \theta^a} \Big(\frac{a (\ln \theta - 2 \ln w_0 )}{a+1} - \frac{a^2}{(a+1)^2} + 1\Big).
\end{equation}

If we replace $\theta$ by $Dn^{\frac{1}{a}}$, we obtain
\begin{gather*}
\mathrm{E}M(n) = \frac{n(n-1)w_0^{2a}}{4 (Dn^{\frac{1}{a}})^a} \Big(\frac{a (\ln (Dn^{\frac{1}{a}}) - 2 \ln w_0)}{a+1} - \frac{a^2}{(a+1)^2} + 1\Big) = \\
= \frac{(n-1)w_0^{2a}}{4D^a} \Big(\frac{\ln n}{a+1} - \frac{a^2}{(a+1)^2} + 1 + \frac{a(\ln D - 2\ln w_0)}{a+1}\Big) = \\
=  A n \ln n (1 + o(1)).
\end{gather*}
\end{proof}

\begin{theorem} \label{th:linear}
Suppose that the growth of the model is sub-linearithmic: ${\frac{\mathrm{E}M(n)}{n\ln n} = o(1)}$, then ${\frac{n^{\frac{1}{a}}}{\theta(n)} = o(1)}$.
\end{theorem}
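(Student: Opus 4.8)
The plan is to argue by contraposition, reducing the statement to a monotone comparison with the linearithmic regime already analysed in Theorem~\ref{th:loglinear}. Concretely, I would assume that $\frac{n^{1/a}}{\theta(n)}$ does \emph{not} tend to $0$ and deduce that the growth cannot be sub-linearithmic, i.e. that $\frac{\mathrm{E}M(n)}{n\ln n}\not\to 0$.

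First I would unwind the negated hypothesis: if $\frac{n^{1/a}}{\theta(n)}\not\to 0$, then $\limsup_n \frac{n^{1/a}}{\theta(n)}>0$, so there exist a constant $\epsilon>0$ and a subsequence $(n_k)$ with $\frac{n_k^{1/a}}{\theta(n_k)}\ge \epsilon$, equivalently $\theta(n_k)\le \epsilon^{-1}n_k^{1/a}$ for all $k$.

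Next I would introduce the comparison threshold $\tilde\theta(n)=D n^{1/a}$ with $D=\epsilon^{-1}$ and invoke Theorem~\ref{th:loglinear}: the model run with threshold $\tilde\theta$ has expected edge count $\mathrm{E}\tilde M(n)=A n\ln n\,(1+o(1))$ with $A=\frac{w_0^{2a}}{4D^a(a+1)}>0$, so $\frac{\mathrm{E}\tilde M(n)}{n\ln n}\to A>0$. The key structural observation is that $\mathrm{E}M(n)=\binom{n}{2}P_e(\theta)$, where $P_e(\theta)=\mathrm{P}\left[(\vec{v_i},\vec{v_j})\ge \theta\right]$ is a tail probability and is therefore non-increasing in $\theta$ (this is immediate from the probabilistic meaning and can be double-checked from the closed form in Lemma~\ref{lm:expectedNumberEdges}). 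Since $\theta(n_k)\le \tilde\theta(n_k)$, monotonicity gives $P_e(\theta(n_k))\ge P_e(\tilde\theta(n_k))$, and because the prefactor $\binom{n_k}{2}$ is identical for both thresholds, $\mathrm{E}M(n_k)\ge \mathrm{E}\tilde M(n_k)$.

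Dividing by $n_k\ln n_k$ and letting $k\to\infty$ then yields $\liminf_k \frac{\mathrm{E}M(n_k)}{n_k\ln n_k}\ge A>0$, so $\frac{\mathrm{E}M(n)}{n\ln n}\not\to 0$, which is the required contradiction. I expect the only delicate point to be the justification of the monotonicity of $P_e$ together with the legitimacy of the threshold comparison at fixed $n_k$; once these are in place the argument is essentially a single inequality. A more computational alternative would work directly from Lemma~\ref{lm:expectedNumberEdges}, showing (in the regime $\theta\ge w_0^2$, which the hypothesis forces for large $n$) that $\frac{\mathrm{E}M(n)}{n\ln n}$ is comparable to $\frac{n\ln\theta}{\theta^a\ln n}$; but then proving the implication $\frac{n\ln\theta}{\theta^a\ln n}\to 0 \Rightarrow \frac{n}{\theta^a}\to 0$ requires a careful case analysis on the growth rate of $\theta$, which the monotone comparison neatly circumvents.
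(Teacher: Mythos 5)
Your proposal is correct and is essentially the paper's own argument: both compare the given model against the reference threshold $\theta'(n)=Dn^{1/a}$ from Theorem~\ref{th:loglinear} and exploit monotonicity of $\mathrm{E}M=\binom{n}{2}P_e(\theta)$ in $\theta$. The only difference is packaging --- you argue by contraposition along a subsequence with $D=\epsilon^{-1}$, while the paper argues directly that for every $D$ one eventually has $\theta(n)\geq Dn^{1/a}$ and then lets $D$ be arbitrary; in fact your direction of the monotonicity step (small threshold implies many edges) is slightly cleaner, since the paper's converse step implicitly needs eventual strictness of the inequality $\mathrm{E}M'(n)>\mathrm{E}M(n)$.
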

\begin{proof}
Let us consider another model with a threshold function $\theta'(n) = Dn^{\frac{1}{a}}$ and the expected number of edges $\mathrm{E}M'(n)$. According to Theorem~\ref{th:loglinear} and the condition ${\frac{\mathrm{E}M(n)}{n\ln n} = o(1)}$ there exists a natural number $N_D$ such that 
$$\forall n \geq  N_D  \,\,\,\,\, \mathrm{E}M'(n) = A n \ln n (1 + o(1)) \geq  \mathrm{E}M(n).$$
  
This also means that for all $n \geq  N_D$ we have $\theta(n) \geq \theta'(n)$. Therefore
$$\forall n \geq  N_D \,\,\,\,\,  \frac{n^{\frac{1}{a}}}{\theta(n)} \leq \frac{n^{\frac{1}{a}}}{\theta'(n)} = \frac{1}{D}. $$

By the arbitrariness of the choice of $D$, we have $ \frac{n^{\frac{1}{a}}}{\theta(n)} = o(1)$.
\end{proof}

\subsection{Concentration theorem}
\indent \indent In this section we will find the variance of the number of the edges and prove the concentration theorem 

Proofs of the following lemmas can be found in the appendix. 

\begin{lemma} \label{th:variation}
Suppose that $x$, $y$ and $z$ are random nodes. Let $P_{<}$ be the probability for the node $x$ to be connected to both nodes $y$ and $z$. 
Then the variance of the number of edges $M$ is 
\begin{equation*}
\mathrm{Var}(M) = \frac{n(n-1)}{2} P_e(1 - P_e) + n \frac{(n-1)(n-2)}{2}(P_{<} - P_e^2),
\end{equation*}
\end{lemma}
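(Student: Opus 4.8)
The plan is to expand $M$ as a sum of Bernoulli indicators over node pairs and to compute the variance through the standard covariance decomposition, exploiting the fact that the only source of dependence between two edges is a shared endpoint. Concretely, I would write $M = \sum_{\{i,j\}} A_{ij}$, where the sum ranges over the $\binom{n}{2} = \frac{n(n-1)}{2}$ unordered pairs and each $A_{ij}$ is an indicator with mean $P_e$ by Lemma~\ref{lm:expectedNumberEdges}. Then
$$\mathrm{Var}(M) = \sum_{\{i,j\}} \mathrm{Var}(A_{ij}) + \sum_{\{i,j\} \neq \{k,l\}} \mathrm{Cov}(A_{ij}, A_{kl}).$$
Since each $A_{ij}$ is a Bernoulli variable with parameter $P_e$, the first sum equals $\frac{n(n-1)}{2} P_e(1 - P_e)$, which already reproduces the first term of the claimed identity.

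First I would classify the cross terms by how the two edge-slots $\{i,j\}$ and $\{k,l\}$ overlap. There are two cases: the pairs are disjoint (four distinct endpoints), or they share exactly one endpoint. In the disjoint case the four latent vectors $\vec{v_i}$ involved are independent, because both the weights $w_i$ and the directions $\vec{x_i}$ are sampled i.i.d.; hence $A_{ij}$ and $A_{kl}$ are functions of disjoint independent blocks of randomness and $\mathrm{Cov}(A_{ij}, A_{kl}) = 0$. This is the conceptual core of the argument: all correlation between distinct edges is transmitted through a common node, so every disjoint cross term drops out.

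Next I would treat the shared-endpoint case. If the two slots are $\{x,y\}$ and $\{x,z\}$ with common node $x$, then by the definition of $P_{<}$ we have $\mathrm{E}[A_{xy} A_{xz}] = P_{<}$, so $\mathrm{Cov}(A_{xy}, A_{xz}) = P_{<} - P_e^2$. It then remains to count these configurations: each corresponds to choosing a center node and an unordered pair of its two neighbours, giving $n\binom{n-1}{2} = \frac{n(n-1)(n-2)}{2}$ of them, after which I would fold in the multiplicity arising from the symmetric covariance sum. Collecting the nonzero contributions reproduces the second term $\frac{n(n-1)(n-2)}{2}(P_{<} - P_e^2)$ and completes the proof.

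The main obstacle I anticipate is purely the combinatorial bookkeeping: enumerating the vertex-sharing pairs without double counting and keeping straight the factor coming from ordered versus unordered cross terms, since a stray factor of two here is exactly what governs the coefficient of the second term. The probabilistic content — the independence of edges supported on disjoint vertex sets, and the reduction of the shared-vertex covariance to $P_{<} - P_e^2$ — is straightforward given the i.i.d. construction of the $\vec{v_i}$ and the definition of $P_{<}$.
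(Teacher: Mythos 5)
Your plan is exactly the paper's own argument: expand $M$ as a sum of edge indicators, take per-edge Bernoulli variances, kill the covariances of vertex-disjoint pairs by independence of the latent vectors, and reduce each shared-endpoint covariance to $P_{<}-P_e^2$. All of that is sound. The genuine gap is precisely the step you defer as ``combinatorial bookkeeping'', and it cannot be closed in the direction you claim. Your cross-term sum $\sum_{\{i,j\}\neq\{k,l\}}\mathrm{Cov}(A_{ij},A_{kl})$ must run over \emph{ordered} pairs of distinct slots for your variance decomposition to be valid (otherwise it needs a factor $2$ in front), and then each of the $n\binom{n-1}{2}=\frac{n(n-1)(n-2)}{2}$ unordered adjacent configurations you count contributes twice. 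Folding in that multiplicity gives
\begin{equation*}
\mathrm{Var}(M)=\frac{n(n-1)}{2}P_e(1-P_e)+n(n-1)(n-2)\bigl(P_{<}-P_e^2\bigr),
\end{equation*}
whose second coefficient is \emph{twice} the one in the statement; no consistent choice of ordered/unordered conventions lands on $\frac{n(n-1)(n-2)}{2}$. A sanity check at $n=3$ (the identity uses only the i.i.d.\ node structure): let each node independently be ``on'' with probability $\frac12$ and join two nodes iff both are on, so $P_e=\frac14$, $P_{<}=\frac18$; then $M$ equals $3$ with probability $\frac18$, $1$ with probability $\frac38$, $0$ otherwise, so $\mathrm{Var}(M)=\frac32-\frac{9}{16}=\frac{15}{16}$, which matches $3P_e(1-P_e)+6(P_{<}-P_e^2)$ and not $3P_e(1-P_e)+3(P_{<}-P_e^2)=\frac{12}{16}$.

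So an honest execution of your plan proves a formula that contradicts the lemma as stated --- and that is not your failure alone, because the paper's proof commits exactly the slip you were wary of. It expands $\bigl(\sum_i I_{e_i}\bigr)^2$ with $2\sum_{i\neq j}$ over unordered pairs, but then replaces $2\sum_{i\neq j}\bigl(\mathbb{E}I_{e_i}I_{e_j}-P_e^2\bigr)$ by the triple sum $\sum_{v}\sum_{w\neq v}\sum_{z>w,\,z\neq v}\bigl(\mathbb{E}I_{e(v,w)}I_{e(v,z)}-P_e^2\bigr)$, which enumerates each adjacent edge pair exactly once, silently dropping the factor $2$. Nor can the statement be rescued by $P_{<}=P_e^2$: comparing Lemma~\ref{lm:expectedNumberEdges} with Lemma~\ref{lm:P_triangle} (for $\theta<w_0^2$ the coefficients of $\theta^2$ are $\frac{a^4}{(a+1)^4}$ versus $\frac{a^3}{(a+1)^2(a+2)}$) shows they differ. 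The correct second term is $n(n-1)(n-2)(P_{<}-P_e^2)$; since this changes the variance only by a constant factor, Theorem~\ref{lm:concetrationHelp} and the concentration Theorem~\ref{th:Con} survive with adjusted constants, but your write-up should either prove the corrected identity or flag the discrepancy rather than assert that the count ``reproduces'' the stated coefficient.
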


\begin{lemma}\label{lm:P_triangle}
Suppose that $x$, $y$ and $z$ are random nodes. Let $P_{<}$ be the probability for the node $x$ to be connected to both nodes $y$ and $z$. \\
Then
\begin{equation*}
P_{<} = 
\begin{cases}
\frac{1}{4}\frac{w_0^{2a}}{\theta^{2a}(a+1)^2} [\theta^{a} - w_0^{2a}] + \frac{1}{4}\frac{w_0^{2a}}{\theta^{a}}\Big[ 1 - 
2 \frac{a^2 }{(a+1)^2} + \\ +
\frac{a^3   }{(a+1)^2(a+2)} \Big], &\theta \geq w_{0}^2,\\
\frac{1}{4}  -
\frac{1}{2} \frac{a^2 \theta }{(a+1)^2} \frac{1}{w_0^2} +
\frac{1}{4}\frac{a^3 \theta^2 }{(a+1)^2(a+2)} \frac{1}{w_0^4} , &\theta < w_{0}^2.
\end{cases}
\end{equation*}
\end{lemma}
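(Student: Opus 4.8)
The plan is to reduce $P_{<}$ to a one-dimensional integral against the Pareto density and then evaluate it piecewise. Fix the node $x$ and condition on its weight $w$. Because the directions $\vec{x_i}$ are uniform on $S^{2}$, the probability that $x$ connects to a single random node depends only on $w$ and equals $P_e(w)$ from Lemma~\ref{lm:probWithWeight}, regardless of the direction of $x$. Since $y$ and $z$ are independent of each other, the events ``$x$ is joined to $y$'' and ``$x$ is joined to $z$'' are conditionally independent given $w$, each occurring with probability $P_e(w)$. Hence the conditional probability of joining both is $P_e(w)^2$, and averaging over the weight of $x$ gives
$$P_{<} = \int_{w_0}^{\infty} P_e(w)^2 f(w)\, dw.$$

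First I would record the inverse moments of the Pareto law, $\int_{w_0}^{\infty} w^{-k} f(w)\, dw = \frac{a}{a+k} w_0^{-k}$, together with their truncated analogues $\int_{b}^{\infty} w^{-k} f(w)\, dw$ on a tail $[b,\infty)$; these are the only integrals that appear. The case split in the statement is dictated by where the breakpoint $w = \theta/w_0$ of $P_e(w)$ sits relative to the support $[w_0,\infty)$: when $\theta < w_0^2$ we have $\theta/w_0 < w_0$, so only the upper branch $P_e(w) = \frac{1}{2}\bigl(1 - \frac{a\theta}{w(a+1)w_0}\bigr)$ contributes, whereas for $\theta \geq w_0^2$ both branches are active, on $[w_0,\theta/w_0]$ and $[\theta/w_0,\infty)$ respectively.

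In the regime $\theta < w_0^2$ I would substitute the upper branch, expand the square into the three terms $1$, $w^{-1}$ and $w^{-2}$, and apply the inverse moments with $k=0,1,2$; collecting the results reproduces $\frac{1}{4} - \frac{1}{2}\frac{a^2\theta}{(a+1)^2 w_0^2} + \frac{1}{4}\frac{a^3\theta^2}{(a+1)^2(a+2)w_0^4}$. In the regime $\theta \geq w_0^2$ I would split the integral at $\theta/w_0$: on $[w_0,\theta/w_0]$ the lower branch contributes $\frac{1}{4}\frac{w_0^{2a}}{\theta^{2a}(a+1)^2}\int_{w_0}^{\theta/w_0} w^{2a} f(w)\, dw$, an elementary power integral that yields the factor $[\theta^a - w_0^{2a}]$, while on $[\theta/w_0,\infty)$ the upper branch gives the three truncated moments with $b=\theta/w_0$, producing the bracket $1 - 2\frac{a^2}{(a+1)^2} + \frac{a^3}{(a+1)^2(a+2)}$.

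The genuine content is the conditional-independence step; once $P_{<} = \mathrm{E}\bigl[P_e(w)^2\bigr]$ is justified, everything else is bookkeeping with Pareto moments. The point requiring care is that the connection probability $P_e(w)$ must be the conditional probability for a \emph{fixed} direction of $x$, not merely the direction-averaged quantity, since this is what makes the two connection events independent given $w$ alone; this is guaranteed by the rotational invariance of the uniform distribution on the sphere, which renders the single-edge probability independent of the direction of $x$.
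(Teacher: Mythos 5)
Your proposal is correct and follows essentially the same route as the paper: the reduction $P_{<} = \int_{w_0}^{\infty} P_e(w)^2 f(w)\,\mathrm{d}w$, the case split according to whether the breakpoint $\theta/w_0$ of $P_e(w)$ lies inside the support $[w_0,\infty)$, and the piecewise evaluation via Pareto (truncated) inverse moments reproduce exactly the paper's computation. The only difference is cosmetic: you explicitly justify the conditional-independence step via rotational invariance of the directions, where the paper simply asserts the identity as ``easily seen,'' and you organize the integrals through general moment formulas rather than computing each one from scratch.
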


Combining these results, we get the following theorem, that will be needed to prove the concentration theorem
%TODO: To prove the concentration theorem we will need the following lemma.
\begin{theorem}\label{lm:concetrationHelp}
If $\theta \geq w_0^2$, the variance is
\begin{equation*}
\mathrm{Var}(M) = \mathrm{E}M + n \frac{(n-1)(n-2)}{2}\Big[A\frac{1}{\theta^{a}} + B \frac{1}{\theta^{2a}} \Big] -  \frac{2(n-2)}{n(n-1)} (\mathrm{E}M)^2 ,
\end{equation*}
 where $A$ and $B$ are constants which depend on the Pareto distribution parameters. 
\end{theorem}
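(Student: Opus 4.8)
The plan is to treat Theorem~\ref{lm:concetrationHelp} as a pure bookkeeping consequence of results already in hand: the variance decomposition of Lemma~\ref{th:variation}, the closed form of $P_{<}$ in the regime $\theta \ge w_0^2$ from Lemma~\ref{lm:P_triangle}, and the elementary identity $\mathrm{E}M = \frac{n(n-1)}{2}P_e$ (each of the $\frac{n(n-1)}{2}$ unordered pairs is an edge with probability $P_e$). No new probabilistic content is needed; the entire task is to substitute and regroup so that the right-hand side is written through $\mathrm{E}M$, $(\mathrm{E}M)^2$, and pure powers of $\theta$.

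First I would rewrite the first summand of Lemma~\ref{th:variation}. Using $\mathrm{E}M = \frac{n(n-1)}{2}P_e$,
\begin{equation*}
\frac{n(n-1)}{2}P_e(1-P_e) = \mathrm{E}M - \frac{n(n-1)}{2}P_e^2 .
\end{equation*}
Next I would eliminate every remaining $P_e^2$ in favour of $(\mathrm{E}M)^2$ through $P_e = \frac{2\,\mathrm{E}M}{n(n-1)}$. In particular the term $-\,n\frac{(n-1)(n-2)}{2}P_e^2$ from the second summand becomes
\begin{equation*}
-\,n\frac{(n-1)(n-2)}{2}\Big(\frac{2\,\mathrm{E}M}{n(n-1)}\Big)^2 = -\,\frac{2(n-2)}{n(n-1)}(\mathrm{E}M)^2 ,
\end{equation*}
which is exactly the last term of the claimed formula; this both isolates the coefficient of $(\mathrm{E}M)^2$ and explains its precise shape.

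The remaining ingredient is the middle bracket, obtained by inserting the $\theta \ge w_0^2$ branch of Lemma~\ref{lm:P_triangle} into $n\frac{(n-1)(n-2)}{2}P_{<}$. Expanding $\frac{w_0^{2a}}{\theta^{2a}(a+1)^2}[\theta^a - w_0^{2a}]$ splits it into a $\theta^{-a}$ piece and a $\theta^{-2a}$ piece, while the second (bracketed) summand of $P_{<}$ is already a pure $\theta^{-a}$ term. Collecting like powers shows $P_{<} = A\,\theta^{-a} + B\,\theta^{-2a}$ with
\begin{equation*}
A = \frac{w_0^{2a}}{4(a+1)^2} + \frac{w_0^{2a}}{4}\Big(1 - \frac{2a^2}{(a+1)^2} + \frac{a^3}{(a+1)^2(a+2)}\Big), \qquad B = -\,\frac{w_0^{4a}}{4(a+1)^2},
\end{equation*}
both depending only on the Pareto parameters $a$ and $w_0$, so that $n\frac{(n-1)(n-2)}{2}P_{<}$ supplies the asserted middle term.

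The step I expect to be most delicate is the reconciliation of the two $P_e^2$ contributions. The conversions above account for the $\theta^{-a}/\theta^{-2a}$ part of $P_<$ and for the $-\frac{2(n-2)}{n(n-1)}(\mathrm{E}M)^2$ term, but the first summand still leaves $-\frac{n(n-1)}{2}P_e^2$. Because Lemma~\ref{lm:expectedNumberEdges} furnishes $P_e$ with a $\ln\theta$ factor, this leftover behaves like $n^2(\ln\theta)^2\theta^{-2a}$ and hence cannot be folded into the constant-coefficient powers $A\theta^{-a}+B\theta^{-2a}$. The point to argue carefully is that it is smaller by a full factor of $n$ than the retained $(\mathrm{E}M)^2$ term, so at the order at which the identity is stated it is subleading and absorbed; checking that this discarding is legitimate — and harmless for the ensuing concentration result, where one needs only $\mathrm{Var}(M)/(\mathrm{E}M)^2 \to 0$ — is the one place where more than routine algebra is required.
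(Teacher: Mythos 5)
Your proposal takes essentially the same route as the paper's proof: substitute $\mathrm{E}M=\frac{n(n-1)}{2}P_e$ into the decomposition of Lemma~\ref{th:variation}, insert the $\theta\ge w_0^2$ branch of Lemma~\ref{lm:P_triangle}, and collect the powers $\theta^{-a}$ and $\theta^{-2a}$; your explicit constants $A$ and $B$ agree with the ones the paper leaves implicit. The one place where you go beyond the paper is the ``delicate'' leftover term, and your diagnosis is exactly right: after converting $-n\frac{(n-1)(n-2)}{2}P_e^2$ into $-\frac{2(n-2)}{n(n-1)}(\mathrm{E}M)^2$, the first summand still contributes $-\frac{n(n-1)}{2}P_e^2=-\mathrm{E}M\,P_e=-\frac{2}{n(n-1)}(\mathrm{E}M)^2$, which carries $\ln\theta$ factors and cannot be folded into constant-coefficient powers of $\theta$. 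The paper's own derivation silently drops this term (it passes from $\mathrm{E}M(1-P_e)+n\frac{(n-1)(n-2)}{2}P_{<}-(n-2)P_e\mathrm{E}M$ to $\mathrm{E}M+n\frac{(n-1)(n-2)}{2}P_{<}-\frac{2(n-2)}{n(n-1)}(\mathrm{E}M)^2$), so the equality as stated is exact only if the last coefficient is corrected, namely
\begin{equation*}
\mathrm{Var}(M)=\mathrm{E}M+n\frac{(n-1)(n-2)}{2}\Big[A\frac{1}{\theta^{a}}+B\frac{1}{\theta^{2a}}\Big]-\frac{2}{n}(\mathrm{E}M)^2 .
\end{equation*}
Your resolution is sound: the discrepancy is smaller than the retained $(\mathrm{E}M)^2$ term by a factor $\frac{1}{n-2}$, and since it is negative the stated formula only over-estimates $\mathrm{Var}(M)$, so Chebyshev's inequality and hence Theorem~\ref{th:Con} are unaffected (the discrepancy contributes exactly $\frac{P_e}{\mathrm{E}M}=\frac{2}{n(n-1)}\to 0$ to the ratio $\mathrm{Var}(M)/(\mathrm{E}M)^2$). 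In short, your proof is correct, follows the paper's approach, and the gap you were worried about is not a flaw in your argument but a small algebraic slip in the paper's own statement and proof.
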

\begin{proof}
According to Lemma~\ref{th:variation} and Lemma~\ref{lm:P_triangle} in case of $\theta \geq w_0^2$, the variance is 
\begin{gather} \label{eq:concentraion-variation-2}
\mathrm{Var}(M) = \frac{n(n-1)}{2} P_e(1 - P_e) + n \frac{(n-1)(n-2)}{2}(P_{<} - P_e^2). \\
P_{<} = \frac{1}{4}\frac{w_0^{2a}}{\theta^{2a}(a+1)^2} [\theta^{a} - w_0^{2a}] + \frac{1}{4}\frac{w_0^{2a}}{\theta^{a}}\Big[ 1 - 
2 \frac{a^2 }{(a+1)^2}  + 
\frac{a^3   }{(a+1)^2(a+2)} \Big]
\end{gather}

According to Lemma~\ref{lm:expectedNumberEdges}, the expected number of edges is 
\begin{equation} \label{eq:concentraion-expectedEdges}
\mathrm{E}M = \frac{n(n-1)}{2}P_e.
\end{equation}

Combining (\ref{eq:concentraion-expectedEdges}) and (\ref{eq:concentraion-variation-2}), we obtain 
\begin{gather*}
\mathrm{Var}(M) = \mathrm{E}M(1 - P_e) + n \frac{(n-1)(n-2)}{2}P_{<} - \mathrm{E}M (n-2) P_e = \mathrm{E}M + \\+ n \frac{(n-1)(n-2)}{2}P_{<} -  \frac{2(n-2)}{n(n-1)} (\mathrm{E}M)^2.
 \end{gather*}

Therefore,
\begin{gather*}
P_{<} = \frac{1}{4}\frac{w_0^{2a}}{\theta^{2a}(a+1)^2} [\theta^{a} - w_0^{2a}] + \frac{1}{4}\frac{w_0^{2a}}{\theta^{a}}\Big[ 1 - 
2 \frac{a^2 }{(a+1)^2} + \\ + 
\frac{a^3   }{(a+1)^2(a+2)} \Big] = \frac{1}{\theta^a}C_{1} - \frac{1}{\theta^{2a}}C_{2} + \frac{1}{\theta^a} C_{3} = A\frac{1}{\theta^{a}} + B \frac{1}{\theta^{2a}},
\end{gather*}

where $C_1$, $C_2$, $C_3$, $A$ and $B$ are constants depending on the Pareto distribution parameters. 

Finally, we obtain
\begin{gather*}
\mathrm{Var}(M) = \mathrm{E}M + n \frac{(n-1)(n-2)}{2}\Big[A\frac{1}{\theta^{a}} + B \frac{1}{\theta^{2a}} \Big] -  \frac{2(n-2)}{n(n-1)} (\mathrm{E}M)^2.
\end{gather*}
\end{proof}

\begin{theorem}{Concentration theorem}\label{th:Con}\\
If $\theta(n)$ and $\mathrm{E}M(n)$ tends to infinity as $n \to \infty$ and $\frac{n^3}{(\mathrm{E}M(n))^2\theta(n)^a} = o(1)$, then      
$$ \forall \varepsilon > 0 \,\,\, P(|M - \mathrm{E}M| \geq \varepsilon \cdot \mathrm{E}M) \xrightarrow[]{n\to \infty} 0 ,$$
where $M$ is the number of edges  in the graph.
\end{theorem}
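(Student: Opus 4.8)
The plan is to invoke Chebyshev's inequality and thereby reduce the entire statement to a single asymptotic estimate on the normalized variance. For any fixed $\varepsilon > 0$,
$$P\big(|M - \mathrm{E}M| \geq \varepsilon \cdot \mathrm{E}M\big) \leq \frac{\mathrm{Var}(M)}{\varepsilon^2 (\mathrm{E}M)^2},$$
so it suffices to show that $\mathrm{Var}(M)/(\mathrm{E}M)^2 \to 0$ as $n \to \infty$; the factor $1/\varepsilon^2$ is a constant and plays no role in the limit.

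First I would observe that, because $\theta(n) \to \infty$, we have $\theta(n) \geq w_0^2$ for all sufficiently large $n$, so the hypothesis of Theorem~\ref{lm:concetrationHelp} is eventually satisfied and we may substitute its variance formula
$$\mathrm{Var}(M) = \mathrm{E}M + n\frac{(n-1)(n-2)}{2}\Big[\frac{A}{\theta^a} + \frac{B}{\theta^{2a}}\Big] - \frac{2(n-2)}{n(n-1)}(\mathrm{E}M)^2.$$
Dividing through by $(\mathrm{E}M)^2$ splits the task into three limits handled term by term. The term $\mathrm{E}M/(\mathrm{E}M)^2 = 1/\mathrm{E}M$ vanishes since $\mathrm{E}M(n) \to \infty$. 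The last term equals $-2(n-2)/(n(n-1))$, which is negative and of order $1/n$, hence harmless (indeed it only decreases the variance). The decisive term is the middle one: using $n(n-1)(n-2) = O(n^3)$ and $|A/\theta^a + B/\theta^{2a}| \leq (|A|+|B|)/\theta^a$ for $\theta \geq 1$, its contribution to $\mathrm{Var}(M)/(\mathrm{E}M)^2$ is bounded by a constant multiple of
$$\frac{n^3}{(\mathrm{E}M)^2 \theta^a},$$
which is exactly the quantity assumed to be $o(1)$. Combining the three estimates yields $\mathrm{Var}(M)/(\mathrm{E}M)^2 \to 0$, and Chebyshev's inequality then gives the claim.

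I expect the only subtlety to be careful bookkeeping rather than any genuine obstacle, since the heavy lifting (the exact second-moment computation) is already packaged in Theorem~\ref{lm:concetrationHelp}. The two points requiring attention are: ensuring the possibly negative signs of the constants $A,B$ do not spoil the bound, which the absolute-value step above resolves; and confirming that all three hypotheses are genuinely consumed, namely $\mathrm{E}M \to \infty$ for the first term, the growth condition $n^3/((\mathrm{E}M)^2\theta^a)=o(1)$ for the middle term, and $\theta \to \infty$ both to legitimize the use of Theorem~\ref{lm:concetrationHelp} and to absorb the $1/\theta^{2a}$ summand into the $1/\theta^a$ estimate. No concentration inequality stronger than Chebyshev is needed, as the available second-moment information is already sharp enough.
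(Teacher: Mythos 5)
Your proof is correct and follows essentially the same route as the paper's: Chebyshev's inequality followed by the variance formula of Theorem~\ref{lm:concetrationHelp}, with the three resulting terms controlled respectively by $\mathrm{E}M \to \infty$, the hypothesis $n^3/((\mathrm{E}M)^2\theta^a) = o(1)$, and the $O(1/n)$ decay of the last term. If anything, your write-up is slightly more careful than the paper's, since you explicitly note that $\theta(n) \to \infty$ legitimizes applying Theorem~\ref{lm:concetrationHelp} (i.e.\ $\theta \geq w_0^2$ eventually) and you handle the signs of $A$ and $B$ via absolute values rather than factoring out $A/\theta^a$.
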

\begin{proof}
According to Chebyshev's inequality, we have
\begin{equation}\label{eq:ChebLinear}
P(|M - \mathrm{E}M | \geq  \varepsilon \cdot \mathrm{E}M )\leq \frac{\mathrm{Var}(M)M}{\varepsilon^2 \cdot (\mathrm{E}M)^2}  .
\end{equation}
Let us estimate the right part of the inequality. 
Using Theorem~\ref{lm:concetrationHelp}, we get
$$ \frac{\mathrm{Var}(M)}{\varepsilon^2 \cdot (\mathrm{E}M)^2} = \frac{1}{\varepsilon^2 \mathrm{E}M} + \frac{O(n^3)}{(\mathrm{E}M)^2}\Big[ A\frac{1}{\theta^{a}} + B \frac{1}{\theta^{2a}} \Big] + O(\frac{1}{n}) = $$
$$ = \frac{1}{\varepsilon^2 \mathrm{E}M} + \frac{O(n^3)}{(\mathrm{E}M)^2}\frac{1}{\theta^{a}}\Big[ 1 + \frac{B}{A\theta^{2a}} \Big] + O(\frac{1}{n}).$$
Using the conditions of the theorem, we obtain
$$ \frac{\mathrm{Var}(M)}{\varepsilon^2 \cdot (\mathrm{E}M)^2} \to 0 \text{ as } n \to \infty.$$
\end{proof}

Combining Theorems \ref{th:loglinear}, \ref{th:linear} and \ref{th:Con} we obtain the following corollary.

\begin{corollary}\label{th:ConCorollary}
Suppose that one of the following conditions holds: 
\begin{itemize}
\item  the threshold function $\theta(n)$ equals $D n^{\frac{1}{a}}$
\item $\frac{n}{\mathrm{E}M(n)} = O(1)$ and $\frac{\mathrm{E}M(n)}{n\ln n} = o(1)$
\end{itemize}
Then
$$ \forall \varepsilon > 0 \,\,\, P(|M - \mathrm{E}M| \geq \varepsilon \cdot \mathrm{E}M) \xrightarrow[n\to \infty]{} 0 ,$$
where $M$ is the number of edges  in the graph.
\end{corollary}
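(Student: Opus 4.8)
The plan is to verify, in each of the two cases separately, that the hypotheses of the Concentration Theorem (Theorem~\ref{th:Con}) hold; the conclusion then follows verbatim. Recall that Theorem~\ref{th:Con} requires exactly three things: $\theta(n) \to \infty$, $\mathrm{E}M(n) \to \infty$, and $\frac{n^3}{(\mathrm{E}M(n))^2 \theta(n)^a} = o(1)$. The entire corollary thus reduces to checking these three conditions under each of the two stated hypotheses, so the proof is really an assembly of the earlier theorems rather than any new computation.

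For the first case, where $\theta(n) = D n^{\frac{1}{a}}$, I would argue as follows. Since $a > 0$ we have $\theta(n) = D n^{\frac{1}{a}} \to \infty$ immediately. By Theorem~\ref{th:loglinear}, $\mathrm{E}M(n) = A n \ln n\,(1 + o(1))$, which tends to infinity. For the last condition, substitute $\theta(n)^a = D^a n$ to obtain
$$\frac{n^3}{(\mathrm{E}M(n))^2 \theta(n)^a} = \frac{n^3}{A^2 n^2 (\ln n)^2 (1+o(1))^2 \cdot D^a n} = \frac{1}{A^2 D^a (\ln n)^2 (1 + o(1))^2},$$
which clearly tends to $0$. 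Hence Theorem~\ref{th:Con} applies and gives the concentration.

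For the second case, I would first invoke Theorem~\ref{th:linear}: the assumption $\frac{\mathrm{E}M(n)}{n \ln n} = o(1)$ yields $\frac{n^{\frac{1}{a}}}{\theta(n)} = o(1)$, and raising this to the $a$-th power gives $\frac{n}{\theta(n)^a} = o(1)$; in particular $\theta(n) \to \infty$. Next, the assumption $\frac{n}{\mathrm{E}M(n)} = O(1)$ forces $\mathrm{E}M(n) \geq c\,n$ for some $c > 0$ and all large $n$, so $\mathrm{E}M(n) \to \infty$, and moreover $\frac{n^2}{(\mathrm{E}M(n))^2} = O(1)$. Factoring the third expression as
$$\frac{n^3}{(\mathrm{E}M(n))^2 \theta(n)^a} = \frac{n^2}{(\mathrm{E}M(n))^2} \cdot \frac{n}{\theta(n)^a} = O(1) \cdot o(1) = o(1),$$
we again satisfy all three hypotheses of Theorem~\ref{th:Con}.

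The argument is essentially bookkeeping once the earlier theorems are in hand. The one step I would treat most carefully is the second case: there the decay condition on $\theta(n)$ is not supplied directly but must be extracted from the sub-linearithmic growth assumption through Theorem~\ref{th:linear}, and then married to $\frac{n}{\mathrm{E}M(n)} = O(1)$ in precisely the factorization above, so that the product of a bounded term and a vanishing term yields the required $o(1)$ bound. Verifying that these two inputs combine correctly is the only place where an error could slip in.
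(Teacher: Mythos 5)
Your proposal is correct and matches the paper's intended argument: the paper gives no detailed proof, simply stating that the corollary follows by combining Theorems~\ref{th:loglinear}, \ref{th:linear} and \ref{th:Con}, which is precisely the verification you carry out (Theorem~\ref{th:loglinear} supplying $\mathrm{E}M \to \infty$ and the $o(1)$ bound in the first case, Theorem~\ref{th:linear} supplying $n/\theta(n)^a = o(1)$ in the second). Both of your case analyses check the three hypotheses of Theorem~\ref{th:Con} correctly, so nothing is missing.
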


In this way we have proved that the number of edges in the graph does not deviate much from its expected value. It means that having the linearithmic or the sub-linearithmic growth of the expected number of edges we also have the same growth for the actual number of edges.

\section{Degree distribution}

\indent \indent In this section we show that our model follows power-law degree distribution with an exponent of $2$ and give two proofs. The first is a mean-field approximation. It is usually applied for a fast checking of hypotheses. The second one is a strict probabilistic proof. To the best of our knowledge it has not been considered in the context of the geographic threshold models yet.

To confirm our proofs we carried out a computer simulation and plotted complementary cumulative distribution  of node degree which is shown on Figure~\ref{fig:powerlaw}.  
We also used a discrete power-law fitting method, which is described in \cite{clauset2009power} and implemented in the network analysis package igraph \footnote{http://igraph.org/}.
We obtained $\alpha = 2.16$, $x_{\min} = 4$  and a quite large $p$-value of $0.9984$ for the Kolmogorov-Smirnov  goodness-of-fit test.

\begin{figure}%
   \centering
   \begin{subfigure}[b]{0.7\textwidth}
       \includegraphics[width=\textwidth]{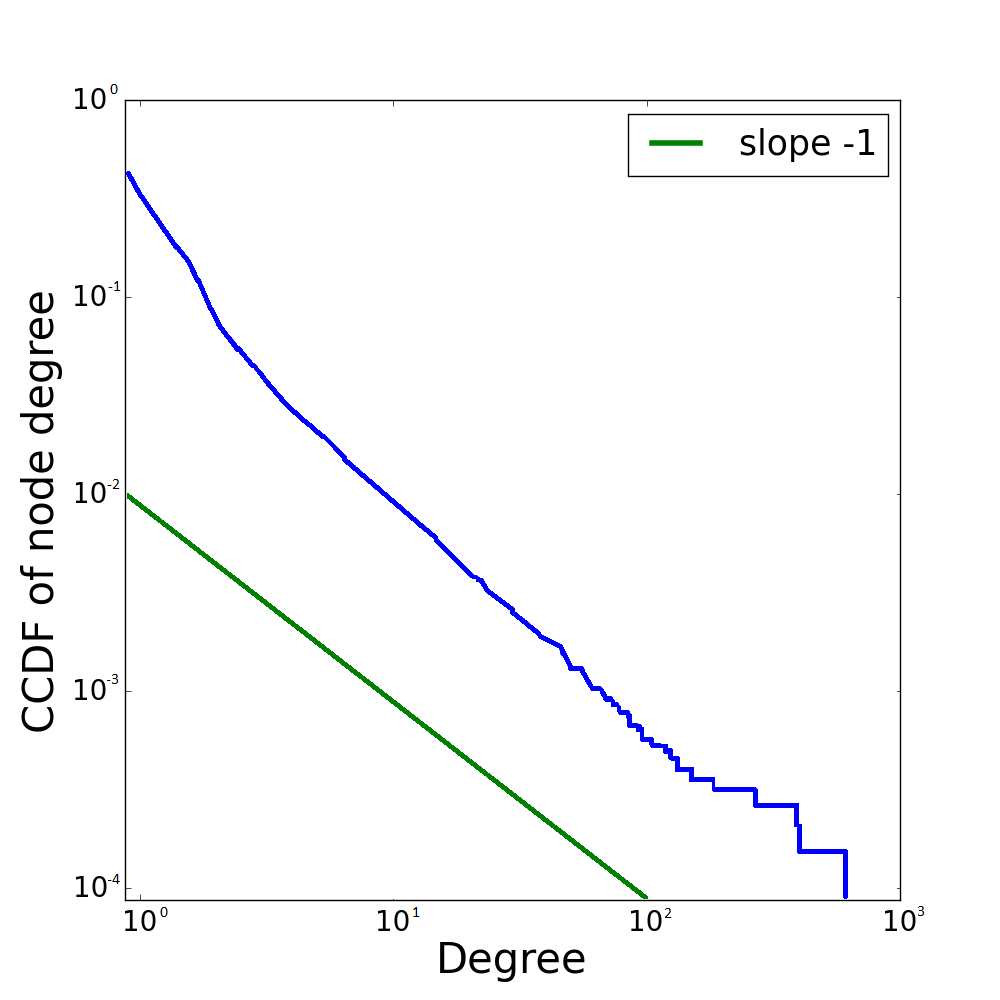}
   \end{subfigure}
   \caption{Complementary cumulative distribution of node degree \\ $n = 3 \cdot 10^5$, $\vec{x_i} \in \mathbb{R}^{3}$, $w_i \sim Pareto(3, 1)$, $\theta = 66.9$}
   \label{fig:powerlaw}
\end{figure}

\begin{theorem}\label{dd:original}
% If $\frac{n^{\frac{1}{a}}}{\theta(n)} = o(1)$, then there exist such constants $C_0$ and $N_0$ that  $\forall k(n): k(n) < C_0 n$ for $n > N_0$ a degree distribution follows a power law with an exponent equals $2$:
% $$
% P(k) \propto k^{-2}
% $$
Let $P(k)$ be the probability of a random node to have a degree $k$. If $\frac{n^{\frac{1}{a}}}{\theta(n)} = o(1)$, then there exist such constants $C_0$ and $N_0$ such that $\forall~k(n):\forall~n~>~N_0 \,\, k(n)~<~C_0n$ we have
$$P(k) = (1+o(1)) k^{-2}.$$
\end{theorem}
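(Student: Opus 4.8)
The plan is to compute $P(k)$ by first conditioning on the weight $w$ of the node. Given $w$, the node is joined to each of the remaining $n-1$ nodes independently with probability $P_e(w)$ from Lemma~\ref{lm:probWithWeight}, so its degree is $\mathrm{Binomial}(n-1,P_e(w))$ and
$$P(k)=\int_{w_0}^{\infty} f(w)\,\binom{n-1}{k}P_e(w)^{k}\bigl(1-P_e(w)\bigr)^{n-1-k}\,dw .$$
The first step is to show that the hypothesis $n^{1/a}/\theta(n)=o(1)$ confines every node to the low-weight branch of $P_e$. Indeed $\theta/w_0\gg w_0 n^{1/a}$, which dominates the typical largest weight $\approx w_0 n^{1/a}$ of $n$ Pareto samples, so with probability $1-o(1)$ no node has $w>\theta/w_0$; hence I may substitute $P_e(w)=\frac{w_0^{a}}{2(a+1)\theta^{a}}\,w^{a}$ and restrict the integral to $w\in[w_0,\theta/w_0]$, the tail contributing negligibly.

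The decisive simplification is the substitution $u=(w/w_0)^{a}$, under which the Pareto law pushes forward to the invariant measure $f(w)\,dw=u^{-2}\,du$ on $[1,\infty)$, while the conditional mean degree becomes $\lambda(w)=(n-1)P_e(w)=k_{\min}\,u$ with $k_{\min}:=\frac{(n-1)w_0^{2a}}{2(a+1)\theta^{a}}$. Here $u$ runs up to $(\theta/w_0^{2})^{a}$, where $\lambda$ attains $\frac{n-1}{2(a+1)}$: this is the largest mean degree possible in the low-weight regime and it explains the restriction $k<C_0 n$ with $C_0=\frac{1}{2(a+1)}$. Replacing the Binomial by a Poisson law of mean $\lambda$ and setting $t=k_{\min}u$ turns the integral into
$$P(k)=(1+o(1))\,k_{\min}\int_{k_{\min}}^{C_0 n}\frac{e^{-t}\,t^{\,k-2}}{k!}\,dt .$$

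It then remains to evaluate this incomplete-Gamma integral uniformly in $k$. Since $k_{\min}\to 0$ the lower limit may be sent to $0$, and since the integrand peaks at $t\approx k$, which lies well inside $[0,C_0 n]$ whenever $k<C_0 n$, the upper limit may be sent to $\infty$; both truncations cost only a factor $1+o(1)$. This produces $\int_0^{\infty}e^{-t}t^{k-2}\,dt=\Gamma(k-1)=(k-2)!$, so that
$$P(k)=(1+o(1))\,k_{\min}\,\frac{(k-2)!}{k!}=(1+o(1))\,k_{\min}\,\frac{1}{k(k-1)}=(1+o(1))\,k_{\min}\,k^{-2},$$
which exhibits the power law with exponent $2$ asserted by the theorem; the $k^{-2}$ shape is inherited directly from the invariant $u^{-2}\,du$ measure, and a careful accounting of the normalization is what places the prefactor in front of $k^{-2}$.

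The main obstacle, where most of the work lies, is making the two approximations quantitative and \emph{uniform} over the whole admissible range of $k$. The Poisson approximation is accurate only while $P_e(w)=\lambda/(n-1)$ is small, i.e.\ for $k=o(n)$; for $k=\Theta(n)$, where $P_e(w)$ is a constant of order $C_0$, I would instead invoke a local central limit (normal) approximation for the Binomial and check that it reproduces the same Jacobian $f(w^{\ast})\,|dw/dk|$ at the point $w^{\ast}$ solving $\lambda(w^{\ast})=k$. Bridging the Poisson and normal regimes with a single uniform local-limit estimate, controlling the truncation near the endpoints (in particular the non-integrability of $t^{k-2}$ at $0$ for the smallest values of $k$), and tracking the prefactor $k_{\min}$ to settle the exact constant are the delicate points.
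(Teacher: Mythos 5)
Your overall plan coincides with the paper's own proof: condition on the weight, write $P(k)$ as a binomial mixture over the Pareto law, split at $w=\theta/w_0$, and use the substitution under which the Pareto density pushes forward to $u^{-2}\,du$ and the conditional mean degree becomes linear in $u$; your final answer $k_{\min}/(k(k-1))$ is exactly what the paper's computation of its low-weight integral $I_1$ yields (you are in fact more careful than the paper about keeping the prefactor). However, the two steps you defer or assert are precisely where the content of the proof lies, and as stated they do not hold. First, the Binomial-to-Poisson replacement, uniformly over all $k(n)<C_0n$, is not a routine step: the standard Poisson approximation error for $\mathrm{Binomial}(n-1,P_e(w))$ at the relevant weights is $O(nP_e(w)^2)=O(k^2/n)$, which already diverges once $k\gg\sqrt{n}$, let alone in the regime $k=\Theta(n)$ that the theorem covers; the local-CLT "bridge" you propose for large $k$ is exactly the uniform estimate you admit you have not carried out. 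The paper never needs any such limit theorem: it observes that integrating the exact binomial kernel against the transformed density $p^{-2}\,dp$ gives precisely an incomplete Beta function, $I_1\propto B(p_1;k-1,n-k)-B(p_0;k-1,n-k)$, and then only elementary estimates $B(x;a,b)=O(x^a/a)$, $B(x;a,b)=B(a,b)+O((1-x)^b/b)$, and $1/B(k-1,n-k)=O(n^{k-1}/\Gamma(k-1))$ are needed to make all errors relative and uniform in $k$. That exact identity is the device which replaces your missing local-limit argument.

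Second, your treatment of the high-weight region is insufficient, and your explicit constant $C_0=\frac{1}{2(a+1)}$ is too large. Knowing that "with probability $1-o(1)$ no node has $w>\theta/w_0$" does not let you discard that region from a pointwise-in-$k$ statement: $P(w>\theta/w_0)=w_0^{2a}\theta^{-a}$, while your main term $k_{\min}k^{-2}$ is of order $w_0^{2a}\theta^{-a}n^{-1}$ when $k=\Theta(n)$, i.e.\ $n$ times smaller, so a probability-of-the-event bound is off by a factor of $n$. What is needed is the pointwise bound that a node with $P_e(w)\geq\frac{1}{2(a+1)}$ has degree exactly $k$ with exponentially small probability (the paper's $I_2=O(\varepsilon_0^{n-k-1}2^{-k}\theta^{-a})$), and that bound is exponentially small only while $k/n$ stays bounded away from $\frac{1}{2(a+1)}$: a node with weight just above $\theta/w_0$ has its binomial mode at $k\approx\frac{n}{2(a+1)}$, and there its contribution is of the same order as your main term, so the claimed asymptotics genuinely fails at the edge $k\approx C_0n$ with your choice of $C_0$. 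This is also why your truncation step ("the peak $t\approx k$ lies well inside $[0,C_0n]$") breaks down at that edge. Both issues force $C_0$ to be a strictly smaller constant — which is consistent with the theorem, since it only asserts the existence of some $C_0$, but it means your stated value and the dismissal of the tail must be replaced by large-deviation-type bounds of the kind the paper's error terms $A$ and $C$ encode.
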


\begin{proof}[Mean-field approximation] This approximation gives power-law only for nodes with weights $w \leq \frac{\theta}{w_0}$. But  the expected number of nodes with weights not satisfying this inequality  $\mathrm{E}m$  is extremely small

\begin{equation}
	\mathrm{E} m = n P(w > \frac{\theta}{w_0}) = n\left(\frac{w_0^2}{\theta}\right)^a = o(1). 
\end{equation}

As it was shown in Lemma \ref{lm:probWithWeight}, the probability of the node $\vec{v_i} = w_i \vec{x_i}$ with weight $w_i = w \leq \frac{\theta}{w_0}$ to have an edge to another random node is

$$
	P_e(w) = \frac{w_{0}^a}{2 \theta^a (a + 1)} w^a.
$$

Let $k_i(w)$ be the degree of the node $v_i$. Then 

$$
k_i(w) = \sum_{i \neq j} I[v_i \text{ is connected to } v_j],
$$
where $I$ stands for the indicator function.

As all nodes are independent, we get 
$$E k_i(w) = (n-1) P_e(w).$$

In the mean-field approximation we assume that $k_i(w)$ is really close to its expectation and we can substitute it by ${(n-1) P_e(w)}$ 
in the following expression for the degree distribution $P(k) = f(w) \frac{\mathrm{d}w}{\mathrm{d}k},$ where $f(w)$ is a density of weights. Thus, 

$$
	P(k) = \frac{2 a w_0^a\theta^a(a+1)}{(n-1) w^{2a}} \propto k^{-2}
$$
\end{proof}

Note that we have not used conditions on $k(n)$ and $\theta(n)$ yet, they are needed to estimate residual
terms in the following rigorous proof.

\begin{proof}[Proof]

Degree $k_i$ of the node $v_i$ is a binomial random variable. Using the probability $P_e(w)$ of the node $v_i$ with weight ${w_i = w}$ to have an edge to another random node, we can get the probability that $k_i$ equals $k$:
\begin{gather*}
P(k_i = k | w_i = w) = {n -1 \choose k} \left(P_e(w)\right)^k (1-P_e(w))^{n-k-1}.
\end{gather*}

To get the total probability we need to integrate this expression with respect to~$w$
\begin{gather*}
P(k_i = k) = {n - 1 \choose k} \int_{w_0}^{\infty} \left(P_e(w)\right)^k (1-P_e(w))^{n-k-1} \frac{aw_0^a}{w^{a+1}} \mathrm{d}w.
\end{gather*}

Because of $P_e(w)$ is a composite function, the integral breaks up into two parts.
\begin{gather*}
	I_1 = \int_{w_0}^{\theta/w_0} \left(P_e(w)\right)^k (1-P_e(w))^{n-k-1} \frac{aw_0^a}{w^{a+1}} \mathrm{d}w,\\
	I_2 = \int_{\theta/w_0}^{\infty} \left(P_e(w)\right)^k (1 - P_e(w))^{n-k-1} \frac{aw_0^a}{w^{a+1}}\mathrm{d}w.
\end{gather*}

% As $\left(P_e(w)\right)^k (1 - P_e(w))^{n-k-1} < 1$, we get
% \begin{gather*}
% 	I_2 < \int_{\theta/w_0}^{\infty} \frac{aw_0^a}{w^{a+1}}\mathrm{d}w = \Big(\frac{w_0^2}{\theta}\Big)^a = o(1).
% \end{gather*}

Thus,
\begin{gather*}
P(k_i = k) =  {n - 1 \choose k} (I_1 + I_2).
\end{gather*}
For estimating $I_1$ we can use the formula $P_e(w) = \frac{1}{2}\frac{w_{0}^a}{\theta^a (a + 1)} w^a$ from Lemma~\ref{lm:probWithWeight}.
After making the substitution to integrate with respect to $P_e(w)$ and using the incomplete beta-function, we get
\begin{gather*}
I_1 = \frac{w_0^{2a}}{2\theta^a a(a+1)} \cdot 
\left(B \left(\frac{1}{2(a+1)}; k-1, n-k \right) - \right. \\ \left. - B \left(\frac{w_0^{2a}}{2\theta^a(a+1)}; k-1, n-k \right)\right).
\end{gather*}

For $I_2$ we can derive an upper bound. Note that for $w \geq \theta/w_0$  we have 
$$P_e(w) = \frac{1}{2}\left(1 - \frac{a\theta}{w(a+1)w_0}\right) < \frac{1}{2}$$ $$1 - P_e(w) \leq 1 - P_e(\theta / w_0) = \frac{1}{2}\left( 1 + \frac{a}{a+1} \right) = \varepsilon_0 < 1. $$
Therefore we obtain the following upper estimate

\begin{gather*}
	I_2 = O\left( \frac{ (\varepsilon_0)^{n-k-1}  }{2^k}   \int_{\theta/w_0}^{\infty}  \frac{aw_0^a}{w^{a+1}}\mathrm{d}w  \right) 
    = O\left( \frac{ (\varepsilon_0)^{n-k-1}}{\theta^a 2^k}   \right) 
\end{gather*}

We now combine estimates for $I_1$, $I_2$ and the following estimates for the incomplete beta-function:
\begin{gather*}
B(x; a, b) = O\Big(\frac{x^a}{a}\Big),\\
B(x; a, b) = B(a, b) + O\Big(\frac{(1-x)^b}{b}\Big),\\
\frac{1}{B(d-1, n-d)} = \frac{\Gamma(n-1)}{\Gamma(d-1)\Gamma(n-d)} = O\Big(\frac{n^{d-1}}{\Gamma(d-1)}\Big).
\end{gather*}
This gives us

\begin{gather*}
P(k_i = k) = \binom{n-1}{k} \frac{w_0^{2a}}{2 \theta^a a(a+1)}
\Big[B(k-1, n-k) +  O\Bigg( \frac{\left(1-\frac{1}{2(a+1)}\right)^{n-k}}{n - k}\Bigg) -\\- O\Bigg( \frac{\left(\frac{w_0^{2a}}{2\theta^a(a+1)}\right)^{k-1}}{k-1} \Bigg) + O\Bigg(  \frac{ (\varepsilon_0)^{n-k-1}}{\theta^a 2^k}   \Bigg) \Big] =\\ = 
\binom{n-1}{k} \frac{w_0^{2a}}{2 \theta^a a(a+1)} B(k-1, n-k) \Big[1+ O\Bigg( \frac{\left(\varepsilon_1\right)^{n-k} n^{k-1}}{(n - k)\Gamma(k-1)}\Bigg) \ + \\ \ + \ O\Bigg( \frac{\left(\frac{w_0^{2a}}{2\theta^a(a+1)}\right)^{k-1}n^{k-1}}{(k-1)\Gamma(k-1)} \Bigg) \ + \ O\Bigg(  \frac{ (\varepsilon_0)^{n-k-1}}{\theta^a 2^k}  \frac{n^{k-1}}{\Gamma(k-1)} \Bigg)\Big].
\end{gather*}

Let us introduce the following notations:

\begin{gather*}
 A = O\Bigg( \frac{\left(\varepsilon_1\right)^{n-k} n^{k-1}}{(n - k)\Gamma(k-1)}\Bigg), \text{where } \varepsilon_1 = 1 - \frac{1}{2(a+1)}, \\
 B = O\Bigg( \frac{\left(\frac{w_0^{2a}}{2\theta^a(a+1)}\right)^{k-1}n^{k-1}}{(k-1)\Gamma(k-1)} \Bigg),\\
 C = O\Bigg(  \frac{ (\varepsilon_0)^{n-k-1}}{\theta^a 2^k}  \frac{n^{k-1}}{\Gamma(k-1)} \Bigg), \text{where } \varepsilon_0 =\frac{1}{2}\left( 1 + \frac{a}{a+1} \right).
\end{gather*}

Using $\frac{n}{\theta^a(n)} = o(1)$, for $k(n) < C_0n$ we get 

\begin{gather*}
	B = O\Bigg( \frac{\left(\frac{w_0^{2a}}{2(a+1)}\right)^{k-1}(\frac{n}{\theta^a})^{k-1}}{\Gamma(k)} \Bigg) = o(1).
\end{gather*}

If $k(n)$ is a bounded function, then since $\varepsilon_0 < 1$ and $\varepsilon_1  < 1$ we have

\begin{gather*}
	A = O\left( \left(\varepsilon_1\right)^\frac{n-k}{k-1} n^{k-1}\right) = o(1),\\
    C = O\left( \left(\varepsilon_0\right)^{n-k} n^{k-1} \right) = o(1).
\end{gather*}

If $k(n) \rightarrow \infty$ as $n \rightarrow \infty$, using Stirling's approximation $\Gamma(k-1) \sim \sqrt{2 \pi (k-2)} \left( \frac{e}{k-2}\right)^{k-2}$ we get

\begin{gather*}
	A = O \left( \frac{k - 2}{(n-k) \sqrt{k-2}} \left((\varepsilon_1)^\frac{n-k}{k-1}  \frac{n}{k - 2}\right)^{k-1}\right),\\
    C = O \left( \frac{\sqrt{k-2}}{\theta^a} \left(  (\varepsilon_0)^{\frac{n-k-1}{k-1}}  \frac{n}{k-2} \right)^{k-1} \right).
\end{gather*}

Since  $ \varepsilon^x  x \rightarrow 0$  for $\varepsilon < 1$ as $x \rightarrow \infty$ there exist constants $C_0$ and $N_0$ such that for $n > N_0$  and $k(n) < C_0n$ we have $(\varepsilon_1)^\frac{n-k}{k-1}  \frac{n}{k - 2} < 1$ and $(\varepsilon_0)^{\frac{n-k-1}{k-1}}  \frac{n}{k-2} < 1$. This implies that $A=o(1)$ and $C=o(1)$.

Thus, we obtain

% * <samosvategor@gmail.com> 2016-01-23T21:04:42.738Z:
%
% Допустим, что k(n) = n - 1, что тогда произойдет с первым O(...) ?
%
% ^.
% * <samosvategor@gmail.com> 2016-01-23T21:06:38.462Z:
%
% Предлагаю тебе сначала рассмотреть, случай k = const, n -> \infty, правда, что тогда все проходит? Почему?   
% Теперь пусть можно, чтобы k = k(n) -> \infty, это более общее условие и более сильная теорема.
% Чем должен быть ограничен рост k(n), чтобы эти O(...) оказались o(1)?
% Это условие и надо внести в формулировку теоремы.
% ^.
% * <ruskagerot@gmail.com> 2016-01-24T12:07:32.541Z:
% Понял проблему. С константой понятно. 
% У меня получилось для k(n) = cn, c<1. Надо расписывать? 
% ^ <samosvategor@gmail.com> 2016-01-25T10:06:02.283Z:
%
% Распиши на бумажке, чтобы на встрече обсудить. Лучше условие в виде: "для любой функции k(n) такой что k(n) < cn (??) , имеем... "
%
% ^.
\begin{equation}
	P(k_i = k) =(1 + o(1))  {n - 1 \choose k} B(k - 1, n - k) = (1 + o(1))  k^{-2}.
\end{equation}
\end{proof}

%
%\section{Source of the power law}
%

%\indent \indent 
%To confirm that, let us consider the following model:
%\begin{itemize}
%\item Weights of vertices will follow the exponential distribution with a rate parameter $\lambda$. 
%\item Two vertices $i$ and $j$ will be connected iff  $e^{w_i + w_j} (x_i, x_j) \geq \theta$.
%\item We also change a function of weights in the edge's existence condition. But it is easy to see that this new condition are not equal to the previous one.
%\end{itemize}

%It may look like the power law degree distribution comes from the parametrs of Pareto distribution. But in spite of this, we always generate network which follows a power law with an exponent equals 2$, regardless of the shape parameter of the Pareto distribution. Surprisingly, the real reason of the power law degree distribution in this model are geometric properties: coordinate vectors of vertices which are distributed over the surface of the $S^{d-1}$ unit sphere.

Note that  regardless of the shape parameter of the Pareto distribution of weights we always generate networks with a degree distribution following a power law with an exponent equals $2$. In the next section we modify our model in order to change the exponent of the degree destribution and some other properties of the resulting networks.

\section{Model modifications}

\indent \indent In this section we will show how to modify our model to get new properties and how these modifications will affect the degree distribution.
\subsection{Directed network}

\indent \indent Many real networks are directed. In order to model them and obtain an exponent of the power law that differs from $2$, we changed the condition for the existence of an edge. There will be a directed edge $(v_i, v_j)$, if and only if

$$
	(w_i^{\alpha} \vec{x_i}, w_j^{\beta} \vec{x_j}) \geq \theta, \alpha,\beta > 0.
$$

As it follows from the next theorem this modification allows us to tune an exponent of the power law.

\begin{theorem}\label{th:ordd}
	Let $P_{out}(k)$ be the probability of an random node to have out-degree $k$, $P_{in}(k)$ – in-degree $k$. If ${n^{\max\{\alpha, \beta\}/a}/\theta(n) = o(1)}$, then there exist constants $C_0$ and $N_0$ such that $\forall k(n):\forall n > N_0 \,\, k(n) < C_0n$ we have 
$$
	P_{out}(k) = (1+o(1)) k^{-1 - \alpha/\beta},
	P_{in}(k) = (1+o(1))k^{-1 - \beta/\alpha}.
$$	
% 	In the oriented model if ${n^{\max\{\alpha, \beta\}/a}/\theta(n) = o(1)}$, then 
% $$
% P_{out}(k) \propto k^{-1 - \alpha/\beta},
% P_{in}(k) \propto k^{-1 - \beta/\alpha},
% $$
% where $P_{out}$ and $P_{in}$ are out-degree and in-degree distributions correspondingly.
\end{theorem}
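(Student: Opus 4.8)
The plan is to reproduce, almost verbatim, the two-step scheme behind Theorem~\ref{dd:original}: first compute the probability that a node of a given weight acquires a given out- (resp. in-) edge, then push the Pareto weight distribution through this relation. The only thing that changes is the threshold condition. Since an out-edge $(v_i,v_j)$ now exists iff $w_i^{\alpha}w_j^{\beta}(\vec{x_i},\vec{x_j})\ge\theta$, fixing the source weight $w_i=w$ and letting the second endpoint be random reduces, after using that for $d=3$ the dot product $(\vec{x_i},\vec{x_j})$ is uniform on $[-1,1]$, to exactly the integral of Lemma~\ref{lm:probWithWeight} with $w$ replaced by $w^{\alpha}$ and the integration variable $w'$ replaced by $w'^{\beta}$. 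First I would carry out that integral to obtain the analogue $P_e^{\mathrm{out}}(w)$; on the dominant small-weight range $w<(\theta/w_0^{\beta})^{1/\alpha}$ it is again a pure power of $w$, with an exponent that is linear in $a$ and fixed by the ratio of $\alpha$ and $\beta$, while the super-threshold range contributes only a bounded, exponentially suppressed correction. The in-degree probability $P_e^{\mathrm{in}}(w)$ is the identical computation with $\alpha$ and $\beta$ interchanged.

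For the mean-field route I would set $\mathrm{E}k_{\mathrm{out}}(w)=(n-1)P_e^{\mathrm{out}}(w)$, replace $k$ by this expectation, invert to write $w$ as a power of $k$, and apply $P(k)=f(w)\,\mathrm{d}w/\mathrm{d}k$ with $f(w)\propto w^{-a-1}$. Because $P_e^{\mathrm{out}}(w)$ is a pure power of $w$, this change of variables sends the Pareto exponent $a$ to a degree exponent equal to one plus the relevant ratio of $\alpha$ and $\beta$, producing the two asymmetric exponents $1+\alpha/\beta$ and $1+\beta/\alpha$ in the statement; the symmetric case $\alpha=\beta$ recovers the exponent $2$ of Theorem~\ref{dd:original}, which is a useful consistency check.

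For the rigorous statement I would copy the binomial argument of Theorem~\ref{dd:original}. Conditioned on $w_i=w$ the out-degree is $\mathrm{Binomial}(n-1,P_e^{\mathrm{out}}(w))$, so
\[ P(k_{\mathrm{out}}=k)=\binom{n-1}{k}\int_{w_0}^{\infty}\bigl(P_e^{\mathrm{out}}(w)\bigr)^{k}\bigl(1-P_e^{\mathrm{out}}(w)\bigr)^{n-k-1}f(w)\,\mathrm{d}w. \]
I would split this at the weight threshold $w_\ast=(\theta/w_0^{\beta})^{1/\alpha}$. On $[w_0,w_\ast]$, where $P_e^{\mathrm{out}}$ is the pure power above, the substitution $u=P_e^{\mathrm{out}}(w)$ turns the integral into an incomplete beta function exactly as for $I_1$ in the proof of Theorem~\ref{dd:original}; on $[w_\ast,\infty)$ the factor $\bigl(1-P_e^{\mathrm{out}}(w)\bigr)^{n-k-1}$ is dominated by a constant $<1$ raised to the power $n-k-1$, giving an exponentially small $I_2$. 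The incomplete-beta estimates and the Stirling bounds used for the error terms $A,B,C$ then transfer unchanged, and the leading term $\binom{n-1}{k}B(k-1,n-k)$ times a constant collapses to the advertised $(1+o(1))k^{-1-\cdots}$. The in-degree statement follows by the same computation with $\alpha$ and $\beta$ swapped.

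The main obstacle, exactly as in Theorem~\ref{dd:original}, is showing the error terms vanish uniformly over the whole range $k(n)<C_0n$, splitting into bounded $k$, and $k\to\infty$ handled by Stirling's approximation. This is precisely where the hypothesis enters: the two degree computations produce $B$-type error factors of the form $(n/\theta^{a/\beta})^{k-1}$ and $(n/\theta^{a/\alpha})^{k-1}$, which are controlled by $n^{\beta/a}/\theta=o(1)$ and $n^{\alpha/a}/\theta=o(1)$ respectively; requiring both in a single statement forces $n^{\max\{\alpha,\beta\}/a}/\theta=o(1)$, which is exactly the assumption made. I would also record, as in the mean-field proof, that the expected number of nodes with super-threshold weight is $o(1)$, so discarding the $w>w_\ast$ region is harmless. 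I expect the only genuinely delicate point to be bookkeeping the constants in $P_e^{\mathrm{out}}$ and $P_e^{\mathrm{in}}$ carefully enough to confirm that the beta-function main term reproduces the precise exponents rather than merely a power law.
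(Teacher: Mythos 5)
Your proposal follows essentially the same route as the paper's own proof: compute the directed analogue of $P_e(w)$ by adapting Lemma~\ref{lm:probWithWeight} (a pure power $w^{a\alpha/\beta}/\theta^{a/\beta}$ below the weight threshold, a bounded expression above it), insert it into the binomial formula for $P(k_i=k)$, and rerun the incomplete-beta and error-term analysis of Theorem~\ref{dd:original}. If anything, you give more detail than the paper, which dismisses these steps as similar to Theorem~\ref{dd:original}; in particular, your observation that the two $B$-type error factors $(n/\theta^{a/\beta})^{k-1}$ and $(n/\theta^{a/\alpha})^{k-1}$ jointly force the hypothesis $n^{\max\{\alpha,\beta\}/a}/\theta(n)=o(1)$ is precisely the point the paper leaves implicit.
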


\begin{proof} 
	Here is a proof for the out-degree distribution. The case of the in-degree distribution is similar.
	
    Firstly, let us compute $P_e(w)$ – the probability of the node $\vec{v_i} = w_i \vec{x_i}$ with weight $w_i = w$ to have an edge to another random node.
    
    \begin{equation}
		P_e(w) = \int_{w_0}^{\infty}f(w')\int_{\substack{x' \in S(0,1) \\ (w^{\alpha} x, (w')^{\beta} x')\geq\theta}} \frac{1}{4\pi} \mathrm{d}x' \mathrm{d}w'.
	\end{equation}
    
    Similarly to Lemma ~\ref{lm:probWithWeight} we get
    \begin{equation}
		P_e(w) = \int_{\max\{w_0, \theta^{1/\beta} / w^{\alpha/\beta}\}}^{\infty}\frac{a w^a_0}{(w')^{a+1}} \frac{1}{2} \left(1 - \frac{\theta}{w^\alpha (w')^\beta}\right) \mathrm{d}w'.
	\end{equation}
    
Thus, we obtain
    \begin{equation}
		P_e(w)  =
         \begin{cases}
           \frac{1}{2}\left(1 - \frac{a\theta}{w^\alpha(a+\beta)w_{0}^\beta}\right), &w > \left(\frac{\theta}{w_0^\alpha}\right)^{1/\beta},\\
           \frac{w^{a\alpha/\beta}w_0^a}{2\theta^{a/\beta}}\left(\frac{1}{a} - \frac{1}{\beta + a} \right), &w \leq \left(\frac{\theta}{w_0^\alpha}\right)^{1/\beta}.
         \end{cases}
	\end{equation}
    
% * <ruskagerot@gmail.com> 2016-01-26T21:53:24.654Z:
%
% при \alpha=\beta вроде бы совпадает с леммой 4.1
%
% ^.

Like in Theorem ~\ref{dd:original} we have

\begin{gather*}
	P(k_i = k) = {n - 1 \choose k} \int_{w_0}^{\infty} \left(P_e(w)\right)^k (1-P_e(w))^{n-k-1} \frac{aw_0^a}{w^{a+1}} \mathrm{d}w.
\end{gather*}

% Similarly, it is easy to get

% \begin{gather*}
% P(k_i = k) \propto {n - 1 \choose k} \int_{w_0}^{\theta^{1/\beta} / w^{\alpha/\beta}} \left(P_e(w)\right)^k (1 -P_e(w))^{n-k-1} \frac{aw_0^a}{w^{a+1}}\mathrm{d}w.
% \end{gather*}

The rest of the proof is similar to the corresponding steps of Theorem~\ref{dd:original}, so we omit details here.

\end{proof}

With $\alpha = \beta$ this model turns into the undirected case with the power law exponent equals $2$ that agrees with Theorem ~\ref{dd:original}. 
%If $\alpha \neq \beta$, then the one of the exponent of the power law is less than 2 and the one is greater than 2. That's quite not always natural, so we can modify our model  to obtain equal power-law exponents. Let's consider two different weights $w_{i, 0}$ and $w_{i, 1}$ of node $v_i$. The condition of the existence of the edge $(v_i, v_j)$ will change in the following way:

% $$
% 	(w_{i, 0}^\alpha x_i, w_{j, 1}^\beta x_j) \geq \theta, \alpha, \beta > 0.
% $$
 
% Then it is easy to prove the following theorem: 

% \begin{theorem}\label{th:dif-or-dd}
% 	In this oriented model If $\frac{n^{\frac{1}{a}}}{\theta(n)} = o(1)$
% $$
% P_{out}(k) \propto k^{-1 - \alpha/\beta},
% P_{in}(k) \propto k^{-1 - \alpha/\beta},
% $$
% where $P_{out}$ and $P_{in}$ are out-degree and in-degree distributions correspondingly.
% \end{theorem}
% %
% Notice that in this case power-law exponents are equal.
\subsection{Functions of dot product}

\indent \indent In our model because of the condition ${w_i w_j (\vec{x_i}, \vec{x_j}) \geq \theta \geq 0}$ node $\vec{v_i}$ can only be connected to the node $\vec{v_j}$ if an angle between $\vec{x_i}$ and $\vec{x_j}$ is less than $\pi/2$. This is a constraint on the possible neighbours of a node that restricts the scope of our model. 

We can solve this issue by changing the condition for the existence of an edge: 
\begin{equation}\label{ineq:h_condition}
 w_i^{\alpha} w_j^{\beta} h((\vec{x_i},\vec{x_j})) \geq \theta,
\end{equation}
where $h:[-1, 1] \to \mathbb{R}$.
On Figure~\ref{fig:r2dif} is an example of how it works in $\mathbb{R}^2$.

\begin{figure}%
   \centering
   \begin{subfigure}[b]{0.3\textwidth}
       \includegraphics[width=\textwidth]{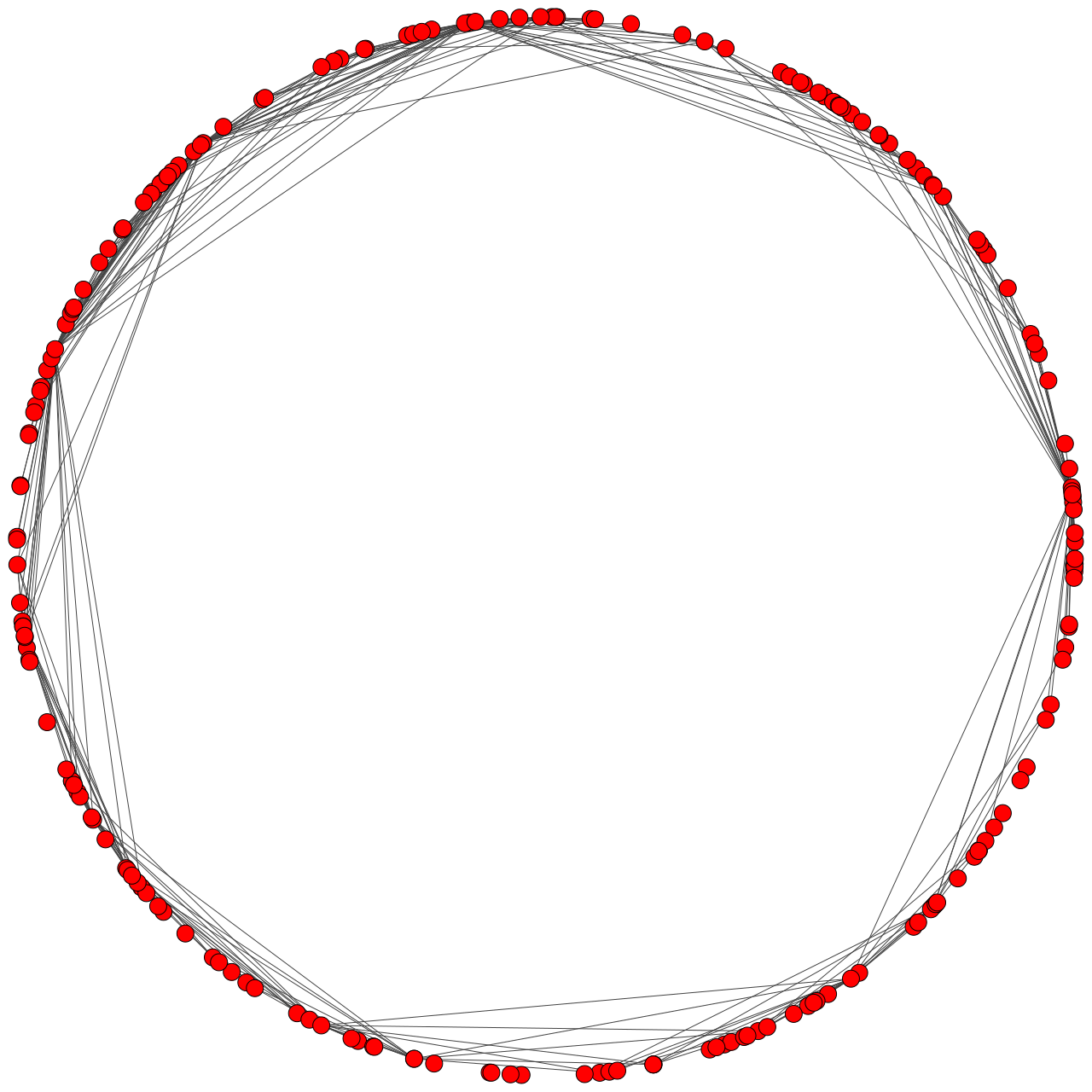}
   \end{subfigure}
   \begin{subfigure}[b]{0.3\textwidth}
       \includegraphics[width=\textwidth]{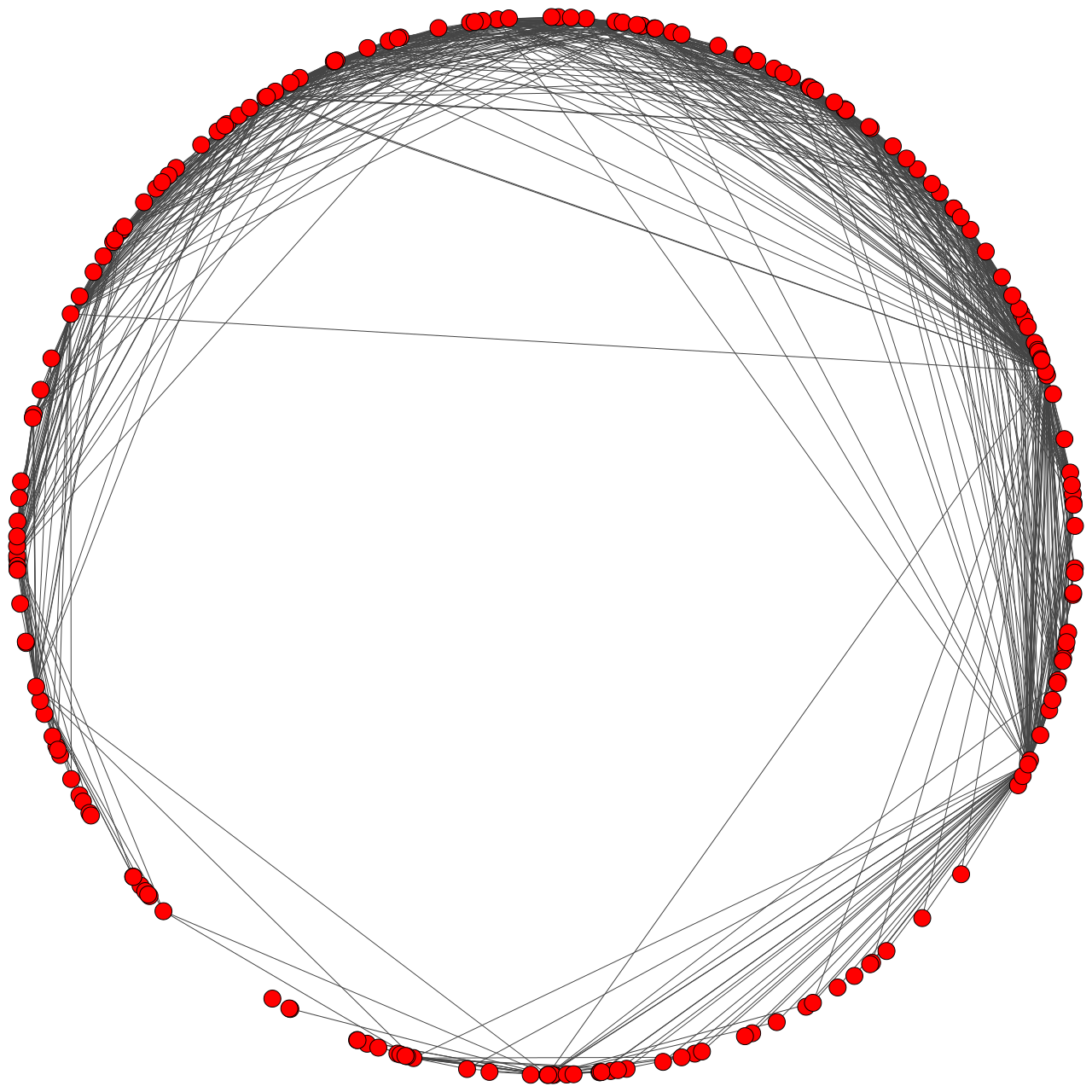}
   \end{subfigure}
   \begin{subfigure}[b]{0.3\textwidth}
       \includegraphics[width=\textwidth]{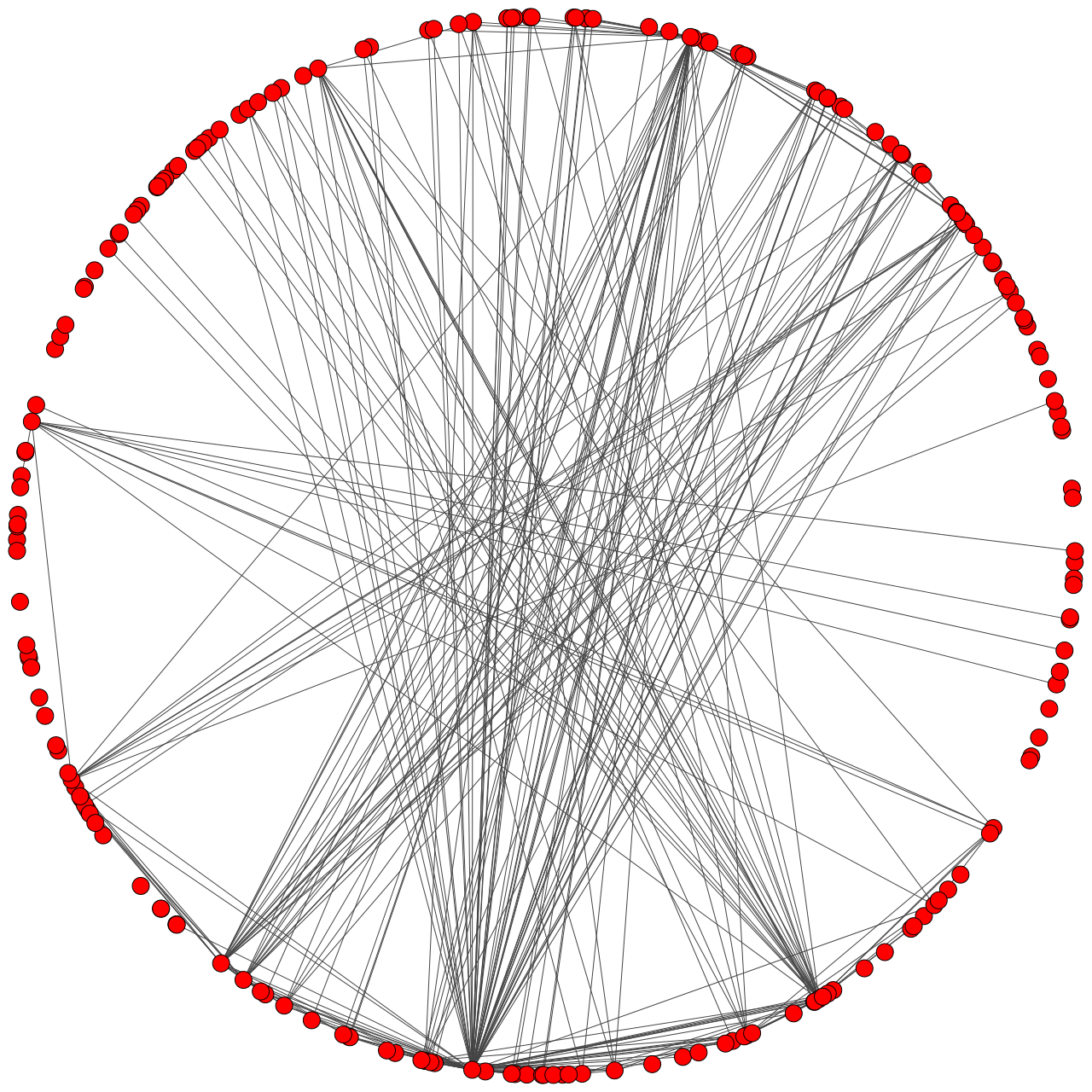}
   \end{subfigure}
   \caption{Example in $\mathbb{R}^2$ of influence $h(x) = x$, $h(x) = e^x$, $h(x) = x^2$}\label{fig:r2dif}
\end{figure}

\begin{theorem}\label{th:fotdpdd}
	Let $P_{out}(k)$ be the probability of an random node to have out-degree $k$, $P_{in}(k)$ – in-degree $k$. If ${n^{\max\{\alpha, \beta\}/a}/\theta(n) = o(1)}$ and $h:[-1, 1] \to \mathbb{R}$ - continuous, strictly increasing function, positive at least in one point from $(-1, 1)$, then there exist constants $C_0$ and $N_0$ such that $\forall k(n):\forall n > N_0 \,\, k(n) < C_0n$ we have
$$
	P_{out}(k) = k^{-1 - \alpha/\beta}(1+o(1)),
	P_{in}(k)  = k^{-1 - \beta/\alpha}(1+o(1)).
$$

% 	In the oriented model with function $h$ of dot product if $n/\theta^a(n) = o(1)$ and $h:[-1, 1] \to \mathbb{R}$ - continuous, strictly increasing function, positive at least in one point from $(-1, 1)$, then 
% \begin{gather*}
% P_{out}(k) \propto k^{-1 - \alpha/\beta}, P_{in}(k) \propto k^{-1 - \beta/\alpha},
% \end{gather*}
% where $P_{out}$ and $P_{in}$ are out-degree and in-degree distributions correspondingly.
\end{theorem}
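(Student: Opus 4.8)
The plan is to reduce Theorem~\ref{th:fotdpdd} to Theorem~\ref{th:ordd}. The only model-dependent ingredient in the proof of Theorem~\ref{th:ordd} (which itself reuses the rigorous proof of Theorem~\ref{dd:original}) is the single-node connection probability $P_e(w)$; everything afterwards — writing the degree as a binomial, splitting the weight integral into $I_1$ and $I_2$, evaluating $I_1$ through the incomplete beta function, and controlling the residuals by Stirling's formula under $n^{\max\{\alpha,\beta\}/a}/\theta(n)=o(1)$ — depends on $P_e(w)$ only through its power-law exponent in $w$ and a constant prefactor. So I would prove that, for every admissible $h$, $P_e(w)$ has exactly the same shape $P_e(w)\propto w^{a\alpha/\beta}\theta^{-a/\beta}$ as in the linear case, with only the constant altered, and then quote Theorem~\ref{th:ordd} verbatim. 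I treat the out-degree; the in-degree is the same computation with $\alpha$ and $\beta$ interchanged.

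To compute $P_e(w)$ I would first invert $h$. Since $h$ is continuous and strictly increasing it admits a continuous strictly increasing inverse $g=h^{-1}$ on $[h(-1),h(1)]$, so the edge condition $w^{\alpha}(w')^{\beta}h((\vec{x},\vec{x}'))\ge\theta$ becomes $(\vec{x},\vec{x}')\ge g\!\left(\theta/(w^{\alpha}(w')^{\beta})\right)$ whenever the argument lies in the range of $h$. As in the derivation of Lemma~\ref{lm:probWithWeight}, for fixed $\vec{x}$ and $\vec{x}'$ uniform on $S^{2}$ the dot product $(\vec{x},\vec{x}')$ is uniform on $[-1,1]$, so the conditional connection probability given $(w,w')$ is $\tfrac12\bigl(1-g(\theta/(w^{\alpha}(w')^{\beta}))\bigr)$, read as $0$ when $\theta/(w^{\alpha}(w')^{\beta})>h(1)$ and as $1$ when $\theta/(w^{\alpha}(w')^{\beta})<h(-1)$ (the latter only when $h(-1)>0$). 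Integrating this against the Pareto density and substituting $w'=(\theta/w^{\alpha})^{1/\beta}s$ turns the factor $\theta/(w^{\alpha}(w')^{\beta})$ into $s^{-\beta}$ and the density element into $a w_0^{a}\,w^{a\alpha/\beta}\theta^{-a/\beta}s^{-(a+1)}\,ds$, which yields
\[
P_e(w)=C_h\,w_0^{a}\,\frac{w^{a\alpha/\beta}}{\theta^{a/\beta}},\qquad
C_h=\frac{a}{2}\int_{h(1)^{-1/\beta}}^{\infty}\bigl(1-g(s^{-\beta})\bigr)\,s^{-(a+1)}\,ds ,
\]
valid in the dominant regime $w\le(\theta/(w_0^{\alpha}h(1)))^{1/\beta}$; when $h(-1)>0$ the integral is cut off at $s=h(-1)^{-1/\beta}$ and an always-connected term, itself a constant multiple of $w^{a\alpha/\beta}\theta^{-a/\beta}$, is added.

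The crucial observation is that $C_h$ is a finite positive constant depending only on $h$, $a$ and $\beta$, and not on $n$, $\theta$ or $w$. Finiteness holds because the integrand behaves like $\tfrac{a}{2}\bigl(1-g(0^{+})\bigr)s^{-(a+1)}$ as $s\to\infty$, which is integrable since $a>0$, and the fixed lower limit $h(1)^{-1/\beta}$ is a positive number at which the integrand vanishes continuously; positivity is exactly where the hypothesis that $h$ is positive somewhere in $(-1,1)$, hence $h(1)>0$, enters, guaranteeing a nonempty band of directions that produce an edge. The heavy nodes with $w>(\theta/w_0^{\alpha})^{1/\beta}$ are negligible: their expected number is of order $n\,\theta^{-a/\beta}=o(1)$ under $n^{\max\{\alpha,\beta\}/a}/\theta(n)=o(1)$, just as in the opening remark of the mean-field proof of Theorem~\ref{dd:original}.

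Having matched $P_e(w)\propto w^{a\alpha/\beta}\theta^{-a/\beta}$ to its counterpart in Theorem~\ref{th:ordd}, the binomial-degree computation and all the beta-function and Stirling estimates go through unchanged, giving $P_{out}(k)=(1+o(1))k^{-1-\alpha/\beta}$ and, symmetrically, $P_{in}(k)=(1+o(1))k^{-1-\beta/\alpha}$. I expect the main obstacle to be precisely the $P_e(w)$ step: showing that for an arbitrary admissible $h$ the $w'$-integral collapses to a constant multiple of $w^{a\alpha/\beta}\theta^{-a/\beta}$ uniformly across the two truncation regimes, and that the resulting $C_h$ is finite and strictly positive. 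Once this is secured, the rest is a black-box reuse of Theorem~\ref{th:ordd}, since the degree asymptotics see $P_e$ only through its exponent in $w$ and a harmless constant.
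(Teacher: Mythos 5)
Your proposal is correct and follows essentially the same route as the paper's own (admittedly sketchy) proof: both reduce Theorem~\ref{th:fotdpdd} to Theorem~\ref{th:ordd} by recomputing $P_e(w)$ for general $h$ --- inverting $h$ on its range, handling the truncation cases where $\theta/(w^{\alpha}(w')^{\beta})$ falls above $h(1)$ or below $h(-1)$, and showing via a change of variables that the integral collapses to an $h$-dependent constant times $w^{a\alpha/\beta}\theta^{-a/\beta}$ --- after which the beta-function and Stirling machinery is reused verbatim. Your write-up is in fact somewhat more careful than the paper's ``short scheme'' in that it explicitly verifies finiteness and strict positivity of the constant $C_h$ (the precise place where the hypothesis that $h$ is positive somewhere on $(-1,1)$ is used) and quantifies the negligibility of the heavy-weight nodes, so no gap remains.
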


\begin{proof}[Short scheme of proof] Here is the scheme of proof for the out-degree distribution. The case of the in-degree is similar.

Restrictions on the function $h$ allow us to modify the proof of the directed case. The main difference is a value of the probability $P_e(w)$ of a node $\vec{v_i} = w_i \vec{x_i}$ with the weight $w_i = w$ to have an edge to another random node.

\begin{equation}\label{eq:h_p_e(w)}
	P_e(w) = \int_{w_0}^{\infty} \frac{aw_0^a}{(w')^{a+1}}\int_{\substack{x' \in S^2 \\ w^\alpha (w')^\beta h((x,x'))\geq\theta}} \frac{1}{4\pi} \mathrm{d}x' \mathrm{d}w'.
\end{equation}

We will denote by $I$ the inner integral:
\begin{equation}\int_{\substack{x' \in S^2 \\ w^\alpha (w')^\beta h((x,x'))\geq\theta}} \frac{1}{4\pi} \mathrm{d}x' \mathrm{d}w'.\end{equation}

We can rewrite inequality~(\ref{ineq:h_condition}) as $ h((x, x')) \geq \frac{\theta}{w^\alpha (w')^\beta}$ and notice that $\frac{\theta}{w^\alpha (w')^\beta} \in (0, +\infty)$. Let us consider $h([-1, 1]) = [r, q]$, on this interval function $h$ is invertable. We examine the mutual position of $[r, q]$ and $(0, +\infty)$. The definition of $h$ implies that $[r, q] \cap (0, +\infty) \neq \emptyset$. This gives us the next two cases.

\textbf{A)} The first case is $[r, q] \subset (0, +\infty)$. If $\frac{\theta}{w^\alpha (w')^\beta} \in [r, q]$, then we may invert $h$ and the inner integral $I$ is equal to $2\pi\left(1 - h^{-1}\left(\frac{\theta}{w^\alpha (w')^\beta}\right)\right)$. If $\frac{\theta}{w^\alpha (w')^\beta} > q$, than the inequality~(\ref{ineq:h_condition}) is not satisfied and $I=0$. If $0 < \frac{\theta}{w^\alpha (w')^\beta} < r$, than the inequality~(\ref{ineq:h_condition}) is satisfied for any pair of $x$ and $x'$, $I = 4\pi$, the surface area of $S^2$.

To deal with $P_e(w)$, we need to compare $w_0$ with boundaries for each range of $\frac{\theta}{w^\alpha (w')^\beta}$.

1) If $w_0 < \frac{\theta^{1/\beta}}{w^{\alpha/\beta} q^{1/\beta}}$, then 
\begin{gather*}
	P_e(w) = \int_{w_0}^{\frac{\theta^{1/\beta}}{w^{\alpha/\beta} q^{1/\beta}}} 0 \mathrm{d}w' + \int_{\frac{\theta^{1/\beta}}{w^{\alpha/\beta} q^{1/\beta}}}^{\frac{\theta^{1/\beta}}{w^{\alpha/\beta} r^{1/\beta}}} \frac{aw_0^a}{(w')^{a+1}} \frac{1}{2} \big[1 -  h^{-1}\big(\frac{\theta}{w^\alpha (w')^\beta}\big) \big] \mathrm{d}w' + \\ +
    \int_{\frac{\theta^{1/\beta}}{w^{\alpha/\beta} r^{1/\beta}}}^{\infty} 4\pi \frac{aw_0^a}{(w')^{a+1}} \mathrm{d}w'.
\end{gather*}

2) If $\frac{\theta^{1/\beta}}{w^{\alpha/\beta} q^{1/\beta}} \leq w_0 < \frac{\theta^{1/\beta}}{w^{\alpha/\beta} r^{1/\beta}}$, then

\begin{gather*}
	P_e(w) = \int_{w_0}^{\frac{\theta^{1/\beta}}{w^{\alpha/\beta} r^{1/\beta}}} \frac{aw_0^a}{(w')^{a+1}} \frac{1}{2} \big[1 -  h^{-1}\big(\frac{\theta}{w^\alpha (w')^\beta}\big) \big] \mathrm{d}w' + \\ + 
    \int_{\frac{\theta^{1/\beta}}{w^{\alpha/\beta} r^{1/\beta}}}^{\infty} 4\pi \frac{aw_0^a}{(w')^{a+1}} \mathrm{d}w'.
\end{gather*}

3) Last case is $w_0 \geq  \frac{\theta^{1/\beta}}{w^{\alpha/\beta} r^{1/\beta}}$. But $\theta(n)$ grows with $n$ and for big enough $n$ this inequality will not be satisfied.

% Rewriting this inequality we get
% $$
% w \geq \frac{\theta^{1/\alpha}}{w_0^{\beta/\alpha}r^{1/\alpha}}.
% $$
% As $\theta(n)$ grows with n, for big enough $n$

%If $\frac{\theta^{1/\beta}}{w^{\alpha/\beta} r^{1/\beta}} \leq w_0$, then

% \begin{gather*}
% 	P_e(w) = \int_{w_0}^{\infty} 4\pi \frac{aw_0^a}{(w')^{a+1}} \mathrm{d}w'.
% \end{gather*}

\textbf{B)} The second case is $[r, q]  \not\subset (0, +\infty)$, which implies $r \leq 0$. If $\frac{\theta}{w^\alpha (w')^\beta} \in (0, q]$, then $I=2\pi\left(1 - h^{-1}\left(\frac{\theta}{w^\alpha (w')^\beta}\right)\right)$. If $\frac{\theta}{w^\alpha (w')^\beta} > q$, then $I=0$.  This gives

\begin{gather*}
	P_e(w) = \int_{\max(w_0, \frac{\theta^{1/\beta}}{w^{\alpha/\beta} q^{1/\beta}})}^{\infty} \frac{aw_0^a}{(w')^{a+1}} \frac{1}{2} \big[1 -  h^{-1}\big(\frac{\theta}{w^\alpha (w')^\beta}\big) \big] \mathrm{d}w'
\end{gather*}

It remains only to show that $P_{out}(k) = k^{-2}(1+o(1))$.
But now it is easy to see that the influnce of every kind of the principal parts of the integral for $P_e(w)$ has been already examined in previous theorems for degree distributions. For example,
\begin{gather*}
\int_{\frac{\theta^{1/\beta}}{w^{\alpha/\beta} q^{1/\beta}}}^{\frac{\theta^{1/\beta}}{w^{\alpha/\beta} r^{1/\beta}}} \frac{aw_0^a}{(w')^{a+1}} \frac{1}{2} \big[1 -  h^{-1}\big(\frac{\theta}{w^\alpha (w')^\beta}\big) \big] \mathrm{d}w' =\\= \frac{w_0^a w^{2a\alpha/\beta}}{\beta \theta^{a/\beta}}\int_{r}^{q} (1 - h^{-1}(t)) t^{a/\beta - 1} \mathrm{d}t,
\end{gather*}
what is proportional to the one we got in Theorem~\ref{th:ordd}. Therefore we are not giving here additional details.  
%so we are not giving here details. 

% Like it was before, $P_e(w)$ will be composite function with break point in $w~=~\left(\frac{\theta}{w_0^\alpha h^\alpha(1)}\right)$. The integral for $P(k_i=k)$ will breaks into 2 parts again, but only the first part is useful. So, for obtaining $P_e(w)$ in case of $w \leq \left(\frac{\theta}{w_0^\alpha h^\alpha(1)}\right)$ we make the substitution  $t = \frac{\theta}{w^\alpha (w')^\beta}$:

% \begin{equation}
% 	P_e(w) = \frac{w_0^a w^{2a\alpha/\beta}}{\beta \theta^{a/\beta}}\int_{h(b)}^{h(1)} (1 - h^{-1}(t)) t ^{a/\beta - 1} \mathrm{d}t.
% \end{equation}

% Thus, $P_e(w) = C(h, \beta) \frac{w_0^a}{2\beta \theta^{a/\beta}} w^{a\alpha/\beta}$, where $C(h, \beta)$ is a constant, that depends only on $h$ and $\beta$.

% The remaining part of the proof is similar to the directed case and we do not give it here. 
\end{proof}

For example, described class of functions contains functions like $e^x$ and ${x^{2m+1} + c}$,  ${m \in \mathbb{N}}$, for a proper constant $c$.

Of course, not only this small class of functions $h(x)$ has no influence on the degree distribution. For example, it is easy to show that $h(x) = x^{2m}, m \in \mathbb{N}$ also has this property. In this way, a proof will be different only in the computation of $P_e(w)$. 

\section{Conclusion}

\indent \indent In our work we suggest a new model for scale-free networks generation, which is based on the matrix factorization and has a geographical interpretation.
We formalize it for fixed size and growing networks. We proof and validate empirically that degree distribution of resulting networks obeys power-law with an exponent of 2.

We also consider several extensions of the model. First, we research the case of the directed network and obtain power-law degree distribution with a tunable exponent. Then, we apply different functions to the dot product of latent features vectors, which give us modifications with interesting properties.

Further research could focus on the deep study of latent features vectors distribution.  It seems that not only a uniform distribution over the surface of the sphere should be considered because, for example, cities are not uniformly distributed over the surface of Earth. Besides, we want to try other distributions of weights.

\section{Appendix}
\subsection{Proof of Lemma \ref{lm:probWithWeight}}
For a node $x$ with the weight $w$, the probability to be connected to a random node is represented by
\begin{equation}
\label{eq:p-w-e-int}
P_e(w) = \int_{w_0}^{\infty}f(w')\int_{\substack{x' \in S^2 \\ ww'(x,x')\geq\theta}} \frac{1}{4\pi} \mathrm{d}x' \mathrm{d}w' .
\end{equation}

We can rewrite inequality $ww'(x,x')\geq\theta$ as  ${(x,x')\geq\frac{\theta}{ww'}}$. If $\frac{\theta}{ww'} \in [0, 1]$, this inequality defines the spherical cap  of the area $2\pi(1 - \frac{\theta}{ww'})$.
Therefore, we have 
\begin{equation}
\label{eq:p-w-e-2}
P_e(w) = \int_{\max\{w_0, \theta / w\}}^{\infty}f(w')2\pi\left(1 - \frac{\theta}{ww'}\right)\frac{1}{4\pi}  \mathrm{d}w' .
\end{equation}
If we substitute $f(w')$ from (\ref{eq:pareto}), we obtain
\begin{equation}
\label{eq:p-w-e-3}
P_e(w) = \int_{\max\{w_0, \theta / w\}}^{\infty}\frac{a}{w_0}\left(\frac{w_0}{w'}\right)^{a+1} \frac{1}{2} \left(1 - \frac{\theta}{ww'}\right) \mathrm{d}w' .
\end{equation}
If $w \leq \theta / w_0$, then
\begin{gather*}
 P_e(w) = \int_{\theta / w}^{\infty}\frac{a}{2w_0}\left(\frac{w_0}{w'}\right)^{a+1} \left(1 - \frac{\theta}{ww'}\right)  \mathrm{d}w' 
 = \\ 
 = \int_{\theta / w}^{\infty}\frac{a}{2w_0}\left(\frac{w_0}{w'}\right)^{a+1}   \mathrm{d}w' 
 - \int_{\theta / w}^{\infty}\frac{a}{2w_0}\left(\frac{w_0}{w'}\right)^{a+1} \frac{\theta}{ww'}  \mathrm{d}w'  
 =\\ 
 = \frac{aw_0^a}{2} \frac{1}{a \left(\theta / w\right)^{a}} - \frac{aw_0^a\theta}{2w} \frac{1}{(a+1) (\theta / w)^{a+1}} = \frac{1}{2}\frac{w_{0}^a}{\theta^a (a + 1)} w^a . 
\end{gather*} 
 If $w > \theta / w_0$, then
 \begin{gather*}
 P_e(w) = \int_{w_0}^{\infty}\frac{a}{w_0}\left(\frac{w_0}{w'}\right)^{a+1} 2\pi\left(1 - \frac{\theta}{ww'}\right)\frac{1}{4\pi}  \mathrm{d}w' =\\ 
 = \frac{aw_0^a}{2} \int_{w_0}^{\infty}\frac{1}{w'^{a+1}} \mathrm{d}w' - \frac{aw_0^a\theta}{2w} \int_{ w_0}^{\infty}\frac{1}{w'^{a+2}}  \mathrm{d}w' = \\ =
 \frac{aw_0^a}{2} \frac{1}{a w_0^{a}} - \frac{aw_0^a\theta}{2w} \frac{1}{(a+1) w_0^{a+1}} 
 = \frac{1}{2}\left(1 - \frac{a\theta}{w(a+1)w_{0}}\right).
\end{gather*}

\subsection{Proof of Lemma \ref{lm:expectedNumberEdges}}
The edge probability is represented by
\begin{equation}
P_e = \int_{w_0}^{\infty}\int_{S^2}\int_{w_0}^{\infty} \int_{\substack{x' \in S^2 \\ ww'(x,x')\geq\theta}} f(w) f(w') \frac{1}{16\pi^2} \mathrm{d}x' \mathrm{d}w' \mathrm{d}x \mathrm{d}w.
\end{equation}
Using (\ref{eq:p-w-e-int}), we obtain
\begin{equation}
P_e = \int_{w_0}^{\infty} \int_{S(0,1)} \frac{1}{4\pi} f(w) P_e(w)  \mathrm{d}x \mathrm{d}w =  \int_{w_0}^{\infty} f(w) P_e(w) \mathrm{d}w .
\end{equation}
If $\theta < w_{0}^2$, then for all $w \in [w_0, \infty)$   $P_e(w)$ equals to $\frac{1}{2}(1 - \frac{a\theta}{w(a+1)w_{0}})$. Using it, we get
\begin{gather*}
P_e = \int_{w_0}^{\infty} \frac{1}{2}(1 - \frac{a\theta}{w(a+1)w_{0}}) a \frac{w_0^a}{w^{a+1}}  \mathrm{d}w = \\
  = \frac{1}{2} -  \int_{w_0}^{\infty} \frac{1}{2}(\frac{a\theta}{w(a+1)w_{0}}) a \frac{w_0^a}{w^{a+1}}  \mathrm{d}w 
  = \frac{1}{2} -  \frac{1}{2} a^2 \theta \frac{w_0^{a-1}}{a+1} \int_{w_0}^{\infty} \frac{1}{w^{a+2}}  \mathrm{d}w = \\
= \frac{1}{2} -  \frac{1}{2} a^2 \theta \frac{w_0^{a-1}}{a+1} \frac{1}{a+1}\frac{1}{w_0^{a+1}} =
\frac{1}{2} -  \frac{1}{2} \frac{a^2}{(a+1)^2}\frac{\theta}{w_0^{2}}  . 
\end{gather*}

If $\theta \geq w_{0}^2$, then

\begin{gather*}
P_e = \int_{w_0}^{\theta/w_0} \frac{1}{2}\frac{w_{0}^a}{\theta^a (a + 1)} w^a a \frac{w_0^a}{w^{a+1}} \mathrm{d}w   +  \int_{\theta/w_0}^{\infty} \frac{1}{2}(1 - \frac{a\theta}{w(a+1)w_{0}}) a \frac{w_0^a}{w^{a+1}}  \mathrm{d}w  =\\
=\frac{1}{2}\frac{w_{0}^a}{\theta^a (a + 1)}aw_0^a \int_{w_0}^{\theta/w_0}\frac{1}{w} \mathrm{d}w  + \frac{1}{2} a w_0^{a} \int_{\theta/w_0}^{\infty} \frac{1}{w^{a+1}} - \\
- \frac{a^2w_0^{a-1} \theta}{2(a+1)} \int_{\theta/w_0}^{\infty} \frac{1}{w^{a+2}} \mathrm{d}w = \\ 
= \frac{1}{2}\frac{w_{0}^{2a} a}{\theta^a (a + 1)} (\ln \theta - 2\ln w_0)  + \frac{w_0^{2a}}{2\theta^a} - \frac{a^2}{2(a+1)^2}\frac{w_0^{2a}}{\theta^{a}} .
\end{gather*}

\subsection{Proof of Lemma \ref{th:variation}}

Let us enumerate pairs of nodes. Each pair of nodes $i$ has an edge indicator~$I_{e_i}$.

By definition, we have
\begin{gather*}
\mathrm{Var}(M) = \mathbb{E}(M^2) - \mathbb{E}(M)^2  = \mathbb{E}(I_{e_1} + \ldots + I_{e_{n(n-1)/2}})^2 - (\mathbb{E}I_{e_1} + \ldots + \\ + \mathbb{E}I_{e_{n(n-1)/2}})^2 
= \sum_{i} \mathbb{E}I_{e_i}^2 + 2\sum_{i\neq j} \mathbb{E}I_{e_i}I_{e_j} - \sum_{i} (\mathbb{E}I_{e_i})^2 - 2\sum_{i\neq j}\mathbb{E}I_{e_i}\mathbb{E}I_{e_{j}} .
\end{gather*}

$I_{e_1}$, $\ldots$, $I_{e_{n(n-1)/2}}$ is the sequence of identically distributed random variables, so their expected value is the same and equals to $P_e$.

Since $\mathbb{E}I_{e_i}^2 = \mathbb{E}I_{e_i} = P_e$, it follows that
\begin{gather*}
\mathbb{E}I_{e_i}I_{e_j} - \frac{n(n-1)}{2}(P_e)^2 - 2\sum_{i\neq j}\mathbb{E}I_{e_i}\mathbb{E}I_{e_j}  = \\
=  \frac{n(n-1)}{2} P_e(1 -P_e) + 2\sum_{i\neq j} \mathbb{E}I_{e_i}I_{e_j} - 2\sum_{i\neq j}\mathbb{E}I_{e_i}\mathbb{E}I_{e_j}.
\end{gather*}

If edges $e_{i}$ and $e_j$ do not have mutual nodes, then $I_{e_i}$ and $I_{e_j}$ are independent variables. Therefore, $\mathrm{E}(I_{e_i} I_{e_j}) = \mathrm{E}(I_{e_i}) \mathrm{E}(I_{e_j}) = P_e^2$. We get

\begin{gather*}
 \mathrm{Var}(M) = \frac{n(n-1)}{2} P_e(1 - P_e) + \\
 + \sum_{v = 1}^{n} \sum_{\substack{w = 1 \\ w \neq v}}^{n} \sum_{\substack{z = w + 1\\ z\neq v}}^{n}  
 (\mathbb{E}I_{e(v, w)}I_{e(v, z)} - \mathbb{E}I_{e(v, w)}\mathbb{E}I_{e(v, z)}) = \\
 = \frac{n(n-1)}{2} P_e(1 - P_e) + \sum_{v = 1}^{n} \sum_{\substack{w = 1 \\ w \neq v}}^{n} \sum_{\substack{z = w + 1\\ z\neq v}}^{n}  
 (\mathbb{E}I_{e(v, w)}I_{e(v, z)} - P_e^2)
\end{gather*}

$\mathbb{E}I_{e(v, w)}I_{e(v, z)}$ is exactly equal to $P_<$.

\subsection{Proof of Lemma \ref{lm:P_triangle}}

It can be easily seen that
\begin{equation*}
P_{<} = \int_{w_0}^{\infty} P_e(w)^2 f(w) \mathrm{d}w.
\end{equation*}

If $\theta < w_{0}^2$ we have

\begin{gather*}
P_< = \int_{w_0}^{\infty} \frac{1}{4}\Big(1 - \frac{a\theta}{w(a+1)w_{0}}\Big)^2 a \frac{w_0^a}{w^{a+1}}  \mathrm{d}w = \frac{1}{4} a w_0^a \int_{w_0}^{\infty} \frac{1}{w^{a+1}}  \mathrm{d}w - \\
 - \frac{1}{2} \frac{a^2 \theta w_0^{a-1}}{a+1} \int_{w_0}^{\infty} \frac{1}{w^{a+2}}  \mathrm{d}w +
\frac{1}{4}\frac{a^3 \theta^2 w_0^{a-2} }{(a+1)^2} \int_{w_0}^{\infty} \frac{1}{w^{a+3}}  \mathrm{d}w = \\
= \frac{1}{4}  -
\frac{1}{2} \frac{a^2 \theta }{(a+1)^2} \frac{1}{w_0^2} +
\frac{1}{4}\frac{a^3 \theta^2 }{(a+1)^2(a+2)} \frac{1}{w_0^4}.
\end{gather*}

If $\theta \geq w_{0}^2$, then

\begin{gather*}
P_< =  \int_{w_0}^{\theta/w_0} \frac{1}{4}\frac{w_{0}^{2a}}{\theta^{2a} (a + 1)^2} w^{2a} a \frac{w_0^a}{w^{a+1}} \mathrm{d}w   + \\
+ \int_{\theta/w_0}^{\infty} \frac{1}{4}\Big(1 - \frac{a\theta}{w(a+1)w_{0}}\Big)^2 a \frac{w_0^a}{w^{a+1}}  \mathrm{d}w.
\end{gather*}

Computing the first integral, we get 

\begin{gather*}
\int_{w_0}^{\theta/w_0} \frac{1}{4}\frac{w_{0}^{2a}}{\theta^{2a} (a + 1)^2} w^{2a} a \frac{w_0^a}{w^{a+1}} \mathrm{d}w = \frac{1}{4}\frac{w_{0}^{2a}}{\theta^{2a}(a + 1)^2}a w_0^{a} \int_{w_0}^{\theta/w_0} w^{a-1} \mathrm{d}w = \\
= \frac{1}{4}\frac{w_0^{2a}}{\theta^{2a}(a+1)^2} [\theta^{a} - w_0^{2a}].
\end{gather*}

And for the second one we have
\begin{gather*}
\int_{\theta/w_0}^{\infty} \frac{1}{4}\Big(1 - \frac{a\theta}{w(a+1)w_{0}}\Big)^2 a \frac{w_0^a}{w^{a+1}}  \mathrm{d}w = \\ 
=
\int_{\theta/w_0}^{\infty} \frac{1}{4} a \frac{w_0^a}{w^{a+1}}  \mathrm{d}w - 
\int_{\theta/w_0}^{\infty} \frac{1}{2}\frac{a\theta}{w(a+1)w_{0}} a \frac{w_0^a}{w^{a+1}}  \mathrm{d}w + \\
+ \int_{\theta/w_0}^{\infty} \frac{1}{4}\frac{a^2\theta^2}{w^2(a+1)^2w_{0}^2} a \frac{w_0^a}{w^{a+1}}  \mathrm{d}w = 
\frac{1}{4} a w_0^a \int_{\theta/w_0}^{\infty} \frac{1}{w^{a+1}}  \mathrm{d}w - \\
- \frac{1}{2} \frac{a^2 \theta w_0^{a-1}}{a+1} \int_{\theta/w_0}^{\infty} \frac{1}{w^{a+2}}  \mathrm{d}w + 
\frac{1}{4}\frac{a^3 \theta^2 w_0^{a-2} }{(a+1)^2} \int_{\theta/w_0}^{\infty} \frac{1}{w^{a+3}}  \mathrm{d}w = \\
 = 
\frac{1}{4}  w_0^a \frac{w_0^{a}}{\theta^{a}} - 
\frac{1}{2} \frac{a^2 \theta w_0^{a-1}}{(a+1)^2} \frac{w_0^{a+1}}{\theta^{a+1}} +
\frac{1}{4}\frac{a^3 \theta^2 w_0^{a-2} }{(a+1)^2(a+2)} \frac{w_0^{a+2}}{\theta^{a+2}} = \\  
= \frac{1}{4}\frac{w_0^{2a}}{\theta^{a}} -
\frac{1}{2} \frac{a^2 }{(a+1)^2} \frac{w_0^{2a}}{\theta^{a}} +
\frac{1}{4}\frac{a^3   }{(a+1)^2(a+2)} \frac{w_0^{2a}}{\theta^{a}}.
\end{gather*}

This gives us $P_<$ in the case of $\theta \geq w_{0}^2$:

\begin{gather*}
P(<) = \frac{1}{4}\frac{w_0^{2a}}{\theta^{2a}(a+1)^2} [\theta^{a} - w_0^{2a}] + \frac{1}{4}\frac{w_0^{2a}}{\theta^{a}}\Big[ 1 - 
2 \frac{a^2 }{(a+1)^2} +
\frac{a^3   }{(a+1)^2(a+2)} \Big].
\end{gather*}

%%%%%%%%%%%%%%%%%%%%%%%%%%%%%%%%%%%%%%%%%%%%%%
%%                                          %%
%% Backmatter begins here                   %%
%%                                          %%
%%%%%%%%%%%%%%%%%%%%%%%%%%%%%%%%%%%%%%%%%%%%%%

\begin{backmatter}

\section*{Competing interests}
The authors declare that they have no competing interests.

\section*{Author's contributions}
This work is the result of a close joint effort in which all authors contributed almost equally to defining and shaping the problem definition, proofs, algorithms, and manuscript. The research would not have been conducted without the participation of any of the authors. All authors participated in writing article, read and approved the final manuscript.

%%%%%%%%%%%%%%%%%%%%%%%%%%%%%%%%%%%%%%%%%%%%%%%%%%%%%%%%%%%%%
%%                  The Bibliography                       %%
%%                                                         %%
%%  Bmc_mathpys.bst  will be used to                       %%
%%  create a .BBL file for submission.                     %%
%%  After submission of the .TEX file,                     %%
%%  you will be prompted to submit your .BBL file.         %%
%%                                                         %%
%%                                                         %%
%%  Note that the displayed Bibliography will not          %%
%%  necessarily be rendered by Latex exactly as specified  %%
%%  in the online Instructions for Authors.                %%
%%                                                         %%
%%%%%%%%%%%%%%%%%%%%%%%%%%%%%%%%%%%%%%%%%%%%%%%%%%%%%%%%%%%%%

% if your bibliography is in bibtex format, use those commands:
\bibliographystyle{bmc-mathphys} % Style BST file (bmc-mathphys, vancouver, spbasic).
\bibliography{references}      % Bibliography file (usually '*.bib' )
% for author-year bibliography (bmc-mathphys or spbasic)
% a) write to bib file (bmc-mathphys only)
% @settings{label, options="nameyear"}
% b) uncomment next line
%\nocite{label}

% or include bibliography directly:
% \begin{thebibliography}
% \bibitem{b1}
% \end{thebibliography}

%%%%%%%%%%%%%%%%%%%%%%%%%%%%%%%%%%%
%%                               %%
%% Figures                       %%
%%                               %%
%% NB: this is for captions and  %%
%% Titles. All graphics must be  %%
%% submitted separately and NOT  %%
%% included in the Tex document  %%
%%                               %%
%%%%%%%%%%%%%%%%%%%%%%%%%%%%%%%%%%%

%%
%% Do not use \listoffigures as most will included as separate files

%%%%%%%%%%%%%%%%%%%%%%%%%%%%%%%%%%%
%%                               %%
%% Tables                        %%
%%                               %%
%%%%%%%%%%%%%%%%%%%%%%%%%%%%%%%%%%%

%% Use of \listoftables is discouraged.
%%

%%%%%%%%%%%%%%%%%%%%%%%%%%%%%%%%%%%
%%                               %%
%% Additional Files              %%
%%                               %%
%%%%%%%%%%%%%%%%%%%%%%%%%%%%%%%%%%%

\end{backmatter}
\end{document}